\def\RR{\mathbb R}
\def\ZZ{\mathbb Z}
\def\NN{\mathbb N}
\def\EE{\mathbb E}
\def\PP{\mathbb P}
\def\PP{\mathbb P}
\def\X{{\bf X}}
\def\bmu{\boldsymbol{\mu}}
\def\btheta{\boldsymbol{\theta}}
\def\bleta{\boldsymbol{\eta}}
\def\bphi{\boldsymbol{\phi}}
\def\bGamma{\boldsymbol{\Gamma}}
\def\bgamma{\boldsymbol{\gamma}}
\newtheorem{Proposition}{Proposition}[section]
\newtheorem{Lemma}{Lemma}[section]
\theoremstyle{definition}
\newtheorem{Remark}{Remark}[section]
\begin{document}

\def\spacingset#1{\renewcommand{\baselinestretch}%
{#1}\small\normalsize} \spacingset{1}

\title{Latent Gaussian Count Time Series}
  \author{
	Yisu Jia \\ University of North Florida \and Stefanos Kechagias \\ SAS Institute \and James Livsey \\ United States Census Bureau  \and Robert Lund \thanks{
		Robert Lund's research was partially supported by the grant \textit{NSF DMS 1407480}.} \\  University of California - Santa Cruz \and Vladas Pipiras \thanks{Vladas Pipiras's research was partially supported by the grant \textit{NSF DMS 1712966}.}\\ University of North Carolina - Chapel Hill}
	
\date{June 7, 2021}
\maketitle

\bigskip
\begin{abstract}
This paper develops the theory and methods for modeling a stationary count time series via Gaussian transformations.   The techniques use a latent Gaussian process and a distributional transformation to construct stationary series with very flexible correlation features that can have any pre-specified marginal distribution, including the classical Poisson, generalized Poisson, negative binomial, and binomial structures.  Gaussian pseudo-likelihood and implied Yule-Walker estimation paradigms, based on the autocovariance function of the count series,
are developed via a new Hermite expansion. Particle filtering and sequential Monte Carlo methods are used to conduct likelihood estimation.   Connections to state space models are made.  Our estimation approaches are evaluated in a simulation study and the methods are used to analyze a count series of weekly retail sales.
\end{abstract}

\noindent
{\it Keywords:}  Count Distributions; Hermite Expansions; Likelihood Estimation;
Particle Filtering; Sequential Monte Carlo; State Space Models
\vfill

\newpage
\spacingset{1.5} 

\section{Introduction}
\label{intro}

This paper develops the theory and methods for modeling a stationary discrete-valued time series by transforming a Gaussian process.  Since the majority of discrete-valued time series involve integer counts supported on some subset of $\{ 0, 1, \ldots \}$, we isolate on this support set.  Our methods are based on a copula-style transformation of a latent Gaussian stationary series and are able to produce any desired count marginal distribution.  It is shown that the proposed model class produces the  most flexible pairwise correlation structures possible, including negatively dependent series.  Model parameters are estimated via 1) a Gaussian pseudo-likelihood approach, developed from some new Hermite expansion techniques, which use only the mean and the autocovariance of the series, 2) an implied Yule-Walker moment estimation approch when the latent Gaussian process is an autoregression, and 3) a particle filtering (PF) / sequential Monte Carlo (SMC) approach that uses a state space model (SSM) representation of the transformation to approximate the true likelihood. Extensions to non-stationary settings, particularly those with covariates, are discussed.

The theory of stationary Gaussian time series is by now well developed.  A central result is that a stationary Gaussian series $\{ X_t \}_{t \in \ZZ}$ having the lag-$h$ autocovariance $\gamma_X(h) =\mbox{Cov}(X_t, X_{t+h})$ exist if and only if $\gamma_X$ is symmetric about lag zero and non-negative definite (see Theorem 1.5.1 in \cite{brockwell:davis:1991}). However, such a result does not hold for stationary count series having a certain prescribed marginal distribution (e.g, Poisson). In principle, distributional existence issues are checked with Kolmogorov's consistency criterion (see Theorem 1.2.1 in \cite{brockwell:davis:1991}); in practice, one needs a specified joint distribution to check for consistency.   Phrased another way, Kolmogorov's consistency criterion is not a constructive result and does not illuminate how to build stationary time series having a particular marginal distribution and correlation structure.  Perhaps owing to this, count time series have been constructed from a plethora of approaches over the years, as is next reviewed.

Drawing from the success of autoregressive moving-average (ARMA) models in describing stationary Gaussian series, early count authors constructed correlated count series from discrete ARMA (DARMA) and integer ARMA (INARMA) difference equation methods. Focusing on the first order autoregressive case for simplicity, a DAR(1) series $\{ X_t \}_{t=1}^T$ with specified marginal distribution $F_X(\cdot)$ is obtained by generating $X_1$ from $F_X(\cdot)$ and then at each subsequent time, either keeping the previous count value with probability $p$ or generating an independent copy of $F_X(\cdot)$ with probability $1-p$.  INAR(1) series are built via the thinned AR(1) equation $X_t = p \circ X_{t-1} + \epsilon_t$, where $\{ \epsilon_t \}$ is an IID count-valued random sequence and $\circ$ is a thinning operator defined by $p \circ Y =  {\rm B}(Y,p)$ for a binomial distribution ${\rm B}(n,p)$ with $n$ trials and success probability $p$. DARMA methods were initially explored in \cite{JacobsLewis_1978a}, but were subsequently discarded by practitioners because their sample paths often remained constant for long periods, especially in highly correlated cases; INARMA series are still used today. In contrast to their Gaussian ARMA brethren, DARMA and INARMA models, and their extensions in \cite{Joe_1996}, cannot produce negative autocorrelations.

The works \cite{Blight_1989} and \cite{CuiLund_2009} take a different approach, producing the desired count marginal distribution by combining IID copies of a correlated Bernoulli series $\{ B_t \}$ built from a stationary renewal sequence.  Explicit autocovariance functions when $\{ B_t \}$ is made by binning (clipping) a stationary Gaussian sequence into zero-one categories are derived  in \citep{Livsey_etal_2018}. While these models can have negative correlations, they do not necessarily produce the most negatively correlated count structures possible.  Also, some important count marginal distributions, including generalized Poisson, are not easily built from these methods.  The results here easily generate any desired count marginal distribution.  Other count model classes studied include Gaussian processes rounded to their nearest integer \citep{Kachour_Yao_2009}, hierarchical Bayesian count model approaches \citep{Gammerman_etal_2015},  and others (see \cite{fokianos:2012} and \cite{Davis_etal_2015} for recent reviews).  Each approach has some drawbacks.

The models here impose a fixed marginal distribution for the counts.  This is in contrast
to generalized ARMA methods (GLARMA), which typically posit conditional distributions
in lieu of marginal distributions, with model parameters typically being random.
As \cite{Asmussen_2014} shows in the Poisson case, once the randomness of the parameters
is taken into account, the true marginal distribution of the series can be far from the
posited conditional distribution. This said, the literature on GLARMA and other conditional
models is extensive \citep{GARMA, zheng:2015}.   See \cite{Dunsmuir} for a recent review of
GLARMA models.

A time series analyst generally needs four features in a count model: 1) general marginal
distributions; 2) the most general correlation structures possible, both positive and
negative; 3) the straight-forward accomodation of covariates; and 4) a well performing
and computationally feasible likelihood inference approach.  All previous count classes
fail to accommodate one or more of these tenets.  This paper's purpose is to introduce
and study a count model class that, for the first time, simultaneously achieves all four features.  Our model employs a latent Gaussian process and a copula-style transformation.  This type of construction has recently shown promise in spatial statistics \citep{DeOliveria_2013, HanDeOliveria_2016}, multivariate modeling \citep{Smith2012, song2013}, and regression \citep{masarotto:2012}, but the theory has yet to be developed for count series (\cite{masarotto:2012, Lennon_2016} provide some partial results). Our objectives here are several-fold. On a methodological level, it is shown, through some newly derived Hermite polynomial expansions, that accurate and efficient numerical quantification of the correlation structure of this count model class is feasible.  Based on a result in \cite{whitt:1976bivariate}, the class is shown to produce the most flexible pairwise correlation structures possible, positive or negative (see Remark \ref{r:max-min-corr} below).  Connections to both importance sampling schemes, where the popular GHK sampler in \cite{masarotto:2012} is adapted to our needs, and to the SSM and SMC literature, which allow natural extensions of the GHK sampler and likelihood evaluation, are made.  The methods are tested on both synthetic and real data.

The works \cite{masarotto:2012, Lennon_2016} are perhaps the closest papers to this study.  While the general latent Gaussian construct adopted is the same, our work differs in that explicit autocovariance relations are developed via Hermite expansions, flexibility and optimality issues  of the model class are addressed, Gaussian pseudo-likelihood and implied least-squares parameter estimation approaches are developed, and both the importance sampling and SSM connections are explored in detail.  Additional connections to \cite{masarotto:2012, Lennon_2016} and to the spatial count modeling papers \cite{HanDeOliveria_2016, HanDeOliveria_2018} are later made.

The rest of this paper proceeds as follows.  The next section and Appendix A introduce our Gaussian transformation count model and establish its basic mathematical and statistical properties.  Section \ref{s:inference} and Appendix B move to estimation, developing three techniques: a Gaussian pseudo-likelihood approach, implied Yule-Walker estimation, and PF/SMC methods. Section \ref{s:simulations} and Appendix C present simulation results.  Section \ref{s:applications} and Appendix D analyze soft drink sales counts at one location of the now defunct Dominick’s Finer Foods retail chain. This series exhibits overdispersion, negative lag one autocorrelation, and dependence on a price reduction (sales) covariate, which illustrates the flexibility of our approach.   Section \ref{s:conclusions} concludes with comments and suggestions for future research.

\section{Theory}
\label{methods}

We seek to construct a strictly stationary time series $\{ X_t \}$ having marginal distributions from any family of count distributions supported in $\{ 0, 1, \ldots \}$, including the Binomial, Poisson, mixture Poisson, negative binomial, generalized Poisson, and Conway-Maxwell-Poisson distributions. The later three distributions are over-dispersed (their variances are larger than their respective means), which is the case for many observed count time series.

Let $\{ X_t \}_{t \in \ZZ}$ be the stationary count time series of interest. Suppose that one wants the marginal cumulative distribution function (CDF) of $X_t$  for each $t$ of interest to be $F_{X}(x) = \PP [X_t \leq x]$, depending on a vector $\boldsymbol{\theta}$ containing all CDF model parameters.  The series $\{ X_t \}$ will be modeled through
\begin{equation}
\label{e:Y-GX}
X_t = G(Z_t),~~~\text{ where }~~~G(z) = F_{X}^{-1}(\Phi(z)), \quad z \in \RR,
\end{equation}
and $\Phi(\cdot)$ is the CDF of a standard normal variable and $F_{X}^{-1}(u) = \inf \{t: F_{X}(t) \geq u \}$, $u \in (0,1)$, is the generalized inverse (quantile function) of the CDF $F_{X}$. The process $\{ Z_t \}_{t \in \ZZ}$ is standard Gaussian for each fixed $t$, but possibly correlated in time:
\begin{equation}
\label{e:X-mean-var}
\EE[ Z_t ] = 0, \quad \EE [ Z_t^2] =1, \quad \rho_Z(h)=:\mbox{Corr}(Z_t,Z_{t+h})=\EE[ Z_t Z_{t+h}].
\end{equation}
This approach has been used in \cite{Smith2012, masarotto:2012, HanDeOliveria_2016, Lennon_2016} with good results.  The autocovariance function (ACVF) of $\{ Z_t \}$, denoted by $\gamma_Z(\cdot)$, is the same as the autocorrelation function (ACF) due to standard normality and depends on another vector $\boldsymbol{\eta}$ of ACVF parameters.

As expanded on in Section \ref{s:particles}, (\ref{e:Y-GX}) can be viewed as a SSM:
\begin{eqnarray*}
	\mbox{\rm State equation} & : &  p(z_t|z_{t-1},\ldots, z_1)\ \mbox{\rm governing latent Gaussian dynamics}; \\
	\mbox{\rm Observation equation} & : &  \PP(X_t=k|z_t) = 1_{A_k}(z_t)\ \mbox{\rm with the set $A_k$ defined below.}
\end{eqnarray*}
Here, $p(\cdot|\cdot)$ is notation for an arbitrary conditional distribution.

This model has alternative names in other literature.  For example, \cite{HChen_2001} call this setup the normal to anything (NORTA) procedure in operations research, whereas \cite{Grigoriu_2007} calls this a translational model in mechanical engineering.  Our goal is to give a reasonably complete analysis of the probabilistic and statistical properties of these models.

The construction in (\ref{e:Y-GX}) ensures that the marginal CDF of $X_t$ is indeed $F_X(\cdot)$.  Elaborating, the probability integral transformation theorem shows that $\Phi(Z_t)$ has a uniform distribution over $(0,1)$ for each $t$; a second application of the result justifies that $X_t$ has the marginal distribution $F_X(\cdot)$ for each $t$. Moreover, temporal dependence in $\{ Z_t \}$ will induce temporal dependence in $\{ X_t \}$ as quantified below. For notation, let $\gamma_{X}(h) = \EE [ X_{t+h} X_t ]- \EE [X_{t+h}]  \EE [X_t]$ denote the ACVF of $\{ X_t \}$.

\subsection{Relationship between autocovariances}
\label{s:model-rel-acvfs}

The autocovariance functions of $\{ X_t \}$ and $\{ Z_t \}$ can be related using Hermite expansions (see Chapter 5 of \cite{pipiras:2017}).
In particular, using the Hermite polynomials $H_k(z) = (-1)^k e^{z^2/2} \frac{d^k}{dz^k} (e^{-z^2/2})$, $z\in \RR$
we can expand the $L^2$ function $G$ as
\begin{equation}
	\label{e:G-herm-exp}
	G(z) =\EE [G(Z_0)]+ \sum_{k=1}^\infty g_{k} H_k(z)
\end{equation}
where the \textit{Hermite coefficients} $g_k$ are given by
\begin{equation}
\label{e:herm-coef}
g_{k} = \frac{1}{k!} \int_{- \infty}^\infty G(z) H_k(z) \frac{e^{-z^2/2}dz}{\sqrt{2\pi}} = \frac{1}{k!} \EE [ G(Z_0) H_k(Z_0)],
\end{equation}
for a standard normal variable $Z_0$.
The relationship between $\gamma_X(\cdot)$ and $\gamma_Z(\cdot)$ is key and is extracted from Chapter 5 of \cite{pipiras:2017}:
\begin{equation}
\label{e:Y-X-cov}
\gamma_{X}(h) = \sum_{k=1}^\infty k! g_{k}^2 \gamma_{Z}(h)^k =: g(\gamma_{Z}(h)),
\end{equation}
where $g(u) = \sum_{k=1}^\infty k! g_{k}^2  u^k$.  For $h=0$,  \eqref{e:Y-X-cov} yields Var$(X_t) = \gamma_{X}(0) = \sum_{k=1}^\infty k! g_{k}^2$, which depends only on the marginal parameters in $\boldsymbol{\theta}$.  Moreover, the ACF of $\{ X_t \}$ is
\begin{equation}
\label{e:Y-X-cor}
\rho_{X}(h) = \sum_{k=1}^\infty \frac{k! g_k^2}{\gamma_{X}(0)} \gamma_{Z}(h)^k =: L(\rho_{Z}(h)),
\end{equation}
where
\begin{equation}
\label{e:Y-X-cor-func}
L(u) = \sum_{k=1}^\infty \frac{k! g_{k}^2}{\gamma_{X}(0)}  u^k =: \sum_{k=1}^\infty \ell_{k} u^k,
\end{equation}
and $\ell_k= k! g_k^2/\gamma_X(0)$.  The function $L$  maps $[-1,1]$ into
(but not necessarily onto) $[-1,1]$.  For future reference, note that $L(0)=0$
and $L(1) = \sum_{k=1}^\infty \ell_{k} = 1$.  Using  (\ref{e:G-herm-exp}) and
$\EE[ H_k(Z_0) H_\ell(-Z_0)]=(-1)^k k! 1_{[ k = \ell]}$ gives
$L(-1) = \mbox{\rm Corr}(G(Z_0),G(-Z_0))$; however, $L(-1)$ is not necessarily $-1$
in general. As such, $L(\cdot)$ ``starts" at $(-1,L(-1))$, passes through $(0,0)$,
and connects to $(1,1)$.  Examples are given in Figure 2 of
Appendix A.

We call the quantity $L(\cdot)$ a {\it link} function, and the coefficients $\ell_{k}$, $k\geq 1$,
{\it link coefficients}. (Sometimes, slightly abusing terminology, we also use these terms
for $g(\cdot)$ and $g_k^2 k!$, respectively.)  A key feature in (\ref{e:Y-X-cov}) is that
the effects of the marginal CDF $F_{X}(\cdot)$ and the ACVF $\gamma_{Z}(\cdot)$ are
``decoupled'' in the sense that the correlation parameters in $\{ Z_t \}$ do not
influence the $g_k$ coefficients in (\ref{e:Y-X-cov}) --- this is useful later in estimation.

Further properties and the numerical calculation of the link function and the Hermite coefficients
are discussed in Appendix A. The computation of the Hermite
coefficients, in particular, is feasible due to the following lemma, which is proved in Appendix
A.

\begin{Lemma} \label{r:coeff-conv}
	If $\EE[ X_t^p ] < \infty$ for some $p>1$, then the coefficients
	$g_k$ satisfy
	\begin{equation}
		g_{k} =  \frac{1}{k!\sqrt{2\pi}} \sum_{n=0}^\infty e^{-\Phi^{-1}(C_{n})^2/2} H_{k-1}(\Phi^{-1}(C_{n})),
	\label{e:herm-coef-expr2}
	\end{equation}
where $C_{n} = \PP[ X_t \leq n]$. (When $\Phi^{-1}(C_n) = \pm \infty$ (that is, $C_n=0$ or $1$), the summand $e^{-\Phi^{-1}(C_{n})^2/2} H_{k-1}(\Phi^{-1}(C_{n}))$ is interpreted as zero.)
\end{Lemma}

Returning to the relationship between $\rho_X(h)$ and $\rho_Z(h)$, from (\ref{e:Y-X-cor}), one can see that
\begin{equation}
|\rho_X(h)| \leq |\rho_Z(h)|,
\label{e:corrless}
\end{equation}
which implies that a positive $\rho_Z(h)$ leads to a positive $\rho_X(h)$. A negative $\rho_Z(h)$ produces a
negative $\rho_X(h)$ since $L(u)$ is, in fact, monotone increasing (see Proposition
A.1 in Appendix  A) and crosses zero at $u=0$ (the negativeness
of $\rho_X(h)$ when $\rho_Z(h)<0$ can also be deduced from the nondecreasing nature of
$G$ via an inequality on page 20 of \cite{Tong} for Gaussian variables).

\begin{Remark}
\label{r:srd-lrd}
The short- and long-range dependence properties of $\{ X_t \}$ can be extracted from those of $\{ Z_t \}$.  Recall that a time series $\{Z_t\}$ is short-range dependent (SRD) if $\sum_{h=-\infty}^\infty |\rho_Z(h)|<\infty$. According to one definition, a series $\{Z_t\}$ is long-range dependent (LRD) if $\rho_Z(h) = Q(h) h^{2d-1}$, where $d \in (0,1/2)$ is the LRD parameter and $Q$ is a slowly varying function at infinity \citep{pipiras:2017}. The ACVF of such LRD series satisfies $\sum_{h=-\infty}^\infty |\rho_Z(h)|=\infty$. If $\{Z_t\}$ is SRD, then so is $\{X_t\}$ by (\ref{e:corrless}).  On the other hand, if $\{Z_t\}$ is LRD with parameter $d$, then $\{X_t\}$ can be either LRD or SRD. The conclusion depends, in part, on the Hermite rank of $G(\cdot)$, which is defined as $r = \min \{ k \geq 1: g_k \neq 0 \}$. Specifically, if $d \in (0,(r-1)/2r)$, then $\{X_t\}$ is SRD; if $d \in ((r-1)/2r,1/2)$, then $\{ X_t \}$ is LRD with parameter $r(d-1/2)+1/2$ (see \cite{pipiras:2017}, Proposition 5.2.4).
\end{Remark}

The model in (\ref{e:Y-GX}) admits the following structure:   if $Z_t$ and $Z_s$ are independent, then so are $X_t$ and $X_s$.   It follows that if $\{ Z_t \}$ is stationary and $q$-dependent, than both $\{ Z_t \}$ and $\{ X_t \}$ must be $q$th order moving-average time series.   Unfortunately, no analogous autoregressive structure holds; in fact, if $\{ Z_t \}$ is a first order autoregression, then $\{ X_t \}$ may not be an autoregression of any order (this can be inferred from \cite{Kedem_1980}).

\begin{Remark}
\label{r:max-min-corr}
The construction in (\ref{e:Y-GX}) yields models with the most flexible correlations possible for $\mbox{Corr}(X_{t_1},X_{t_2})$ for two variables $X_{t_1}$ and $X_{t_2}$ with the same marginal distribution $F_X$. Indeed, let $\rho_- = \min \{ \mbox{Corr}(X_{t_1},X_{t_2}): X_{t_1},X_{t_2} \sim F_X\}$ and define $\rho_+$ similarly with $\min$ replaced by $\max$. Then, as shown in
Theorem 2.5 of \cite{whitt:1976bivariate},
\[
\rho_+ = \mbox{Corr}(F_X^{-1}(U), F_X^{-1}(U)) = 1, \quad \rho_- =  \mbox{Corr}(F_X^{-1}(U), F_X^{-1}(1-U)),
\]
where $U$ is a uniform random variable over $(0,1)$. Since $U \stackrel {\cal D} {=} \Phi(Z)$ and $1-U \stackrel {\cal D} {=} \Phi(-Z)$ for a standard normal random variable $Z$, the maximum and minimum correlations $\rho_+$ and $\rho_-$ are indeed achieved with (\ref{e:Y-GX}) when $Z_{t_1}=Z_{t_2}$ and  $Z_{t_1}=-Z_{t_2}$, respectively. The preceding statements are non-trivial for $\rho_-$ only since $\rho_+=1$ is attained whenever $X_{t_1}=X_{t_2}$. It is worthwhile to compare this to the discussion following (\ref{e:Y-X-cor-func}). Finally, all correlations in $(\rho_-,\rho_+)=(\rho_-,1)$ are achievable since $L(u)$ in (\ref{e:Y-X-cor-func}) is continuous in $u$. The flexibility of correlations for Gaussian copula models in the spatial context was also noted and studied in \cite{HanDeOliveria_2016}, especially in comparison to a class of hierarchical, e.g.\ Poisson, models.
\end{Remark}

The preceding remark settles autocovariance flexibility issues for stationary count series.  Flexibility is a concern when the series is negatively correlated, an issue arising, for example, with hurricane counts in \cite{Livsey_etal_2018} and chemical process counts in \cite{Kachour_Yao_2009}.  Since any general count marginal distribution can also be achieved, the model class is quite general.

\subsection{Covariates}
\label{s:cov-1}
There are situations where stationarity is not desired.  Such scenarios can often be accommodated by simple variants of the above setup.  For concreteness, consider a situation where a vector $\textbf{M}_t$ of $J$ non-random covariates is available to
explain the series at time $t$. If one wants $X_t$ to have the marginal distribution $F_{\boldsymbol{\theta}(t)}(\cdot)$, where $\boldsymbol{\theta}(t)$ is a vector-valued function of $t$ containing marginal distribution parameters, then simply set
\begin{equation}
X_t = F^{-1}_{\boldsymbol{\theta}(t)}(\Phi(Z_t))
\label{e:timevarying}
\end{equation}
and reason as before.   We do not recommend modifying $\{ Z_t \}$ for the covariates as this may bring process existence issues into play.

Generalized linear models link functions (not to be confused with $L(\cdot)$ in (\ref{e:Y-X-cor})--(\ref{e:Y-X-cor-func})) can be used when parametric support set bounds are encountered.  For example, a Poisson regression with correlated errors can be formulated via a parameter vector $\boldsymbol{\beta}$ of regression coefficients with $\boldsymbol{\theta}(t)= \EE[ X_t ] = \mbox{exp}(\boldsymbol{\beta}^\prime \boldsymbol{M}_t)$.  Here, the exponential link guarantees that the Poisson parameter is positive.  The above construct requires the covariates to be non-random; should covariates be random, marginal distributions may change from $F_{\boldsymbol{\theta}(t)}$.

\subsection{Particle filtering and state space model connections}
\label{s:particles}

This subsection studies the implications of the latent structure of our model, especially as it relates to SSMs and importance sampling approaches.  This will be used to construct PF/SMC approximations of various quantities, and in goodness-of-fit assessments. Our main reference is \cite{doucet:2001}. As in that monograph, let  $z_{0:t} = \{Z_0=z_0, \ldots, Z_t=z_t\}$, $x_{0:t} = \{X_0=x_0, \ldots, X_t=x_t\}$, and $p(\cdot)$ and $p(\cdot | \cdot)$ denote joint and conditional probabilities (or their densities, depending on the context).  For example, $p(z_{0:t}|x_{0:t})$ denotes the conditional density of $Z_{0:t}$ given  $x_{0:t}$. Similarly, let $\mathbb{E}[\cdot | x_{0:t}]$ denote conditional expectation given $x_{0:t}.$ The SSM formulation starts by specifying $p(z_{t+1}|z_{0:t})$ and $p(x_t|z_t)$. While $\{ Z_t \}$ is often first order Markov, implying that  $p(z_{t+1}|z_{0:t}) = p(z_{t+1}|z_{t})$, this is not necessary.

To specify $p(z_{t+1}|z_{0:t})$ in our stationary Gaussian case, we compute the best one-step-ahead linear prediction of $Z_{t+1}$ from $z_{0:t}$ given by $\hat{Z}_{t+1}=\phi_{t0}Z_t+\ldots+\phi_{tt}Z_0$. The coefficients $\phi_{ts}$, $s \in \{0, \ldots, t\}$, can be computed recursively in $t$ from the ACF of $\{ Z_t \}$ via the classical  Durbin-Levinson (DL) or the Innovations algorithm, for examples.  As a convention, we take $\hat{Z}_0 = 0$. Let $r_t^2 = \EE[ (Z_t - \hat Z_t)^2]$ be the corresponding unconditional mean squared prediction error. With this notation,
\begin{equation}
\label{e:smc-model}
	p(z_{t+1}|z_{0:t}) \stackrel{{\cal D}}{=}  \mathcal{N}(\hat{z}_{t+1},r^2_{t+1}),
\end{equation}
where  $\hat{z}_{t+1}=\phi_{t0}z_t+\ldots+\phi_{tt}z_0$. Again, $\{ Z_t \}$ does not have to be Markovian (of any order). On the other hand, with \eqref{e:Y-GX},
\begin{equation}
\label{e:smc-model-mass}
	p(x_t|z_t) = \delta_{G(z_t)}(x_t) =
	\left\{
		\begin{array}{cl}
			1, & \mbox{if}\ x_t = G(z_t), \\
			0, & \mbox{otherwise},
		\end{array}
	\right.
\end{equation}
where $\delta_y(x)$ is a unit point mass at $\{ y \}$. The equations in \eqref{e:smc-model} and \eqref{e:smc-model-mass} constitute the SSM representation of \eqref{e:Y-GX}.

In inference and related tasks for SSMs, the basic goal is to compute the conditional expectation $\mathbb{E}[v(Z_{0:t}) | x_{0:t}]$ for some function $v$. This is often carried out through an importance sampling algorithm such as sequential importance sampling (SIS),
which generates $N$ independent particle trajectories $Z^i_{0:t}, i \in \{ 1, \ldots,  N \}$, from a  proposal distribution $\pi(z_{0:t}|x_{0:t})$ and approximates the conditional expectation as
\begin{equation}
\label{e:smc-approx}
\mathbb{E}[v(Z_{0:t}) | x_{0:t}] \approx \displaystyle\sum_{i=1}^N v(Z^i_{0:t}) \widetilde{w}^i_t=: \hat{\mathbb{E}}[v(Z_{0:t})|x_{0:t}],
\end{equation}
where
\begin{equation}
\label{e:smc-weights}
	\widetilde{w}^i_t = \frac{w(Z^i_{0:t})}{\sum_{i=1}^N w(Z^i_{0:t})}, \quad w(z_{0:t}) = \frac{p(z_{0:t}|x_{0:t})}{\pi(z_{0:t}|x_{0:t})},
\end{equation}
are the (normalized) importance weights (see \cite{doucet:2001} and \cite{liu:1998}).  Furthermore, in SIS,
\begin{equation}
\label{e:smc-weights-2}
	\widetilde{w}^i_t \propto \widetilde{w}^i_{t-1}w_t(Z^i_{0:t}), \quad w_t(z_{0:t}) = \frac{p(x_t|z_t)p(z_t|z_{0:t-1})}{\pi(z_t|z_{0:t-1}, x_{0:t})}
\end{equation}
(see (1.6) in \cite{doucet:2001}, which is adapted to a possibly non-Markov setting by replacing $p(z_t|z_{t-1})$ with $p(z_t|z_{0:t-1})$). The two probability terms in the numerator of $w_{t}(z_{0:t})$ in \eqref{e:smc-weights-2}
constitute the SSM, whereas the denominator relates to the proposal distribution.

We suggest the following proposal distribution and the resulting SIS algorithm for our model. Take
\begin{equation}
\label{e:proposal}
	\pi(z_t|z_{0:t-1},x_{0:t}) \stackrel{{\cal D}}{=}  \mathcal{N}_{A_{x_t}}(\hat{z}_t, r_t^2),
\end{equation}
where $\mathcal{N}_{A}$ denotes a normal distribution restricted to the set $A$, and
\begin{equation}
\label{e:truncation set}
A_k = \{z: \Phi^{-1}(C_{k-1}) \leq z \le \Phi^{-1}(C_{k}) \}.
\end{equation}
The role of $A_k$ stems from the fact
\begin{equation}
\label{e:truncation set-2}
	k = G(z) \Leftrightarrow z \in A_k
\end{equation}
(i.e., the count value $k$ is obtained if and only if $Z_t \in A_k$; see the expression (A.2) for $G(z)$). In particular, for $\{ Z_t^i \}$ generated from the proposal distribution \eqref{e:proposal}, the term $p(x_t|Z_t^i)$ in the incremental weight $w_t(Z^i_{0:t})$ of \eqref{e:smc-weights-2} is always set to unity.   The rest of the incremental weights are calculated as
\[
	w_t(z_{0:t}) = \frac{p(z_t|z_{0:t-1})}{\pi(z_t|z_{0:t-1},x_{0:t})}
	 			 = \frac{e^{-\frac{(z_t-\hat{z}_t)^2}{2r_t^2}}/(2\pi r^2_t)^{1/2}}
	  				{e^{-\frac{(z_t-\hat{z}_t)^2}{2r_t^2}} / [(2\pi r^2_t)^{1/2} \times \mathbb{P} (N(\hat{z}_t,r_t^2) \in A_{x_t})] }
\]
\begin{equation}	
\label{e:explicit-truncNormal}	
	= \mathbb{P} (\mathcal{N}(\hat{z}_t,r_t^2) \in A_{x_t})  =\Phi\Big(\frac{\Phi^{-1}(C_{x_t})-\hat{z}_t}{r_t}\Big) - \Phi\Big(\frac{\Phi^{-1}(C_{x_t-1})-\hat{z}_t}{r_t}\Big)=: w_t(\hat{z}_t).
\end{equation}
The choice of the proposal distribution is largely motivated by $\PP(X_t=k|Z_t^i)=1_{A_k}(Z_t^i)$ and the explicit form in \eqref{e:explicit-truncNormal} for the incremental weights $w_t(z_{0:t})$. Optimality considerations are mentioned in Remark B.3.

The following steps summarize our SIS algorithm.

\noindent {\bf Sequential Importance Sampling (SIS):} For $i \in \{1, \ldots, N \}$, where $N$ represents the number of particles, initialize the weight $w^i_0=1$ and the latent series $Z_0^i$ by
\begin{equation}
\label{e:X0-new}
Z_0^i \stackrel{{\cal D}}{=} \mathcal{N}_{A_{x_0}}(0,1).
\end{equation}
Then, recursively over $t=1, \ldots, T$, perform the following steps:
\begin{itemize}
	\item [1:] Compute $\hat{Z}_t^i$ with the DL or other algorithm using the previously generated values of $Z_0^i, \ldots, Z_{t-1}^i$.
	\item [2:] Update the series $Z_t^i$ and the importance weight $w_t^i$ via
\begin{equation}
\label{e:X-update-pred-again}
	Z_t^i \stackrel{{\cal D}}{=} \mathcal{N}_{A_{x_t}}(\hat{Z}_t^i,r_t^2), \quad w^i_t = w^i_{t-1} w_t(\hat{Z}_t^i),
\end{equation}
where	$w_t(z)$ is defined in \eqref{e:explicit-truncNormal}.
\end{itemize}

\begin{Remark} \label{r:sis-idea}
For $i \in \{ 1, \ldots, N \}$, the constructed path $\{ Z^i_t \}_{t=0}^T$ is one of the $N$ independent ``particles'' used to approximate the conditional expectation in (\ref{e:smc-approx}). Equation \eqref{e:X-update-pred-again} ensures that for each $i$, the path $\{ Z^i_t \}_{t=0}^T$ obeys the restriction $G(Z^i_t)=x_t$ and matches the temporal structure of $\{ Z_t \}$.  These two properties show that $\{ Z^i_t \}_{t=0}^T$ is a realization of the latent Gaussian stationary series producing $X_t = x_t$ for all $t$. Finally, we note where the model parameters enter into the SIS algorithm. The marginal distribution parameters $\boldsymbol{\theta}$ enter through the form of $C_x$ in (\ref{e:explicit-truncNormal}), whereas the temporal dependence parameters $\boldsymbol{\eta}$ enter through the one-step-ahead prediction coefficients $\phi_{ts}, s \in \{0,\ldots,t\}$, in the calculation of $\hat{Z}^i_t$ in Step 1 of the algorithm, and through the prediction error $r_t$.
\end{Remark}

To compute the model likelihood, several known formulas applicable in the (general) SIS setting are needed.   The relation
\[
\frac{p(z_{0:t}|x_{0:t})}{\pi(z_{0:t}|x_{0:t})} = \displaystyle\prod_{s=0}^t w_s(z_{0:s}) \frac{p(x_0)}{p(x_{0:t})},
\]
produces
\[
\mathbb{E}[w_t^i v(Z_{0:t}^i)|x_{0:t}]p(x_0) = \mathbb{E}[v(Z_{0:t})|x_{0:t}]p(x_{0:t}).
\]
In particular (with $v(\cdot) \equiv1 $),
\begin{equation}
\label{e:weight-expectation}
	\mathbb{E}[w_t^i]p(x_0) = p(x_{0:t}).
\end{equation}
To conduct prediction, we use Equation (1.2) in \cite{doucet:2001} to get
\begin{equation}
\label{e:cond expectation}
	\mathbb{E}[v(X_{t+1})|x_{0:t}] = \mathbb{E}\Big[   \mathbb{E}[v( G(\hat{Z}_{t+1}))|Z_{0:t}]|x_{0:t}\Big] =: \mathbb{E}[D_{v,t+1}(\hat{Z}_{t+1}) | x_{0:t}],
\end{equation}
where
\begin{equation}
\label{e:cond expectation-2}
	D_{v,t+1}(z) = \mathbb{E}\big[v(G(\mathcal{N}(z,r^2_{t+1})))\big] = \displaystyle\int_{\mathbb{R}}
				 v(G(z_{t+1}))\frac{1}{\sqrt{2\pi r^2_{t+1}}}e^{-\frac{(z_{t+1}-z)^2}{2r_{t+1}^2}}dz_{t+1},
\end{equation}
since $\hat{Z}_{t+1}|z_{0:t} \stackrel{{\cal D}}{=} \mathcal{N}(\hat{z}_{t+1},r_{t+1}^2)$.
In view of \eqref{e:cond expectation} and \eqref{e:smc-approx}, the following prediction approximation arises:
\begin{equation}
\label{e:prediction-approx}
	\EE[v({X}_{t+1})|x_{0:t}] \approx \sum_{i=1}^N \frac{w^i_t}{\Omega_{N,t}} D_{v,t+1}(\hat{Z}_{t+1}^i) =: \hat{\mathbb{E}}[v(X_{t+1})|x_{0:t}],
\quad
	\Omega_{N,t}=\sum_{i=1}^N w_t^i.
\end{equation}

Appendix B further connects our model and algorithm to the popular GHK sampler, hidden Markov models (HMMs), and PF and SMC techniques.

The SIS algorithm has a fundamental weakness called ``weight degeneracy": as the algorithm propagates through an increasing number of iterations, a large number of the normalized weights become negligible.  As a result. only a few particles ``contribute" in the likelihood approximation.
Following the developments in the SMC (see \cite{doucet:2001}, \cite{liu:2008} and \cite{chopin:2020}) and HMM literatures (Sections 10.4.1 and 10.4.2 in \cite{douc:2014}), we modify the SIS algorithm by adding a resampling step (all future simulations and computations use resampling).
\vspace{.12in} \noindent {\bf Sequential Importance Sampling with Resampling (SISR):} Proceed as in the SIS algorithm, but modify Step 2 and add a resampling Step 3 as follows:
\begin{itemize}
\item [2:] Modify Step 2 of the SIS by setting
\begin{equation}
\label{e:sisr-update}
\widetilde Z_t^i \stackrel {\cal D} {=}  \mathcal{N}_{A_{x_t}}(\hat{Z}^i_{t},r_{t}^2)  , \quad \widetilde w^i_t = w^i_{t-1} w_t(\hat{Z}_t^i),
\quad
\widetilde \Omega_{N,t} = \sum_{i=1}^N \widetilde w^i_t.
\end{equation}
\item [3:] For each particle $i \in \{ 1, \ldots, N \}$, draw, conditionally and independently given $\{ (Z_s^i,w_s^i), s \leq t-1  \}$ and $\{ \widetilde Z_t^i \}$, a multinomial trial $I_t^i \ in \{ 0,1 \}$ for each $t$ and $i$ with the success probabilities
$\{ \widetilde w^i_t / \widetilde{\Omega}_{N,t} \}$ and set $Z_t^i = \widetilde Z_t^{I_t^i}$ and $w_t^i=1$.

\end{itemize}
While the resampling step removes particles with low weights, mitigating degeneracy
issues, it introduces additional estimator variance. We follow standard practice
and resample only when the variance of the weights exceeds a certain threshold,
quantified by the so-called \textit{effective sample size} defined as
$\mbox{ESS}(w^i_t) = ( \sum_{i=1}^{N} ( w^i_t/\Omega_{N,t} )^2 )^{-1}$, and the
resampling step is executed when $\mbox{ESS}(w^i_t)<N/2$ as in \cite{doucet:2009}. See also
Section 2.5.3 in \cite{liu:2008} for a justification of the ESS based on the Delta method.

\section{Inference}
\label{s:inference}
The model in \eqref{e:Y-GX} contains the parameters $\btheta$ in the marginal count distribution $F_X$ and $\bleta$ in the dependence structure of $\{ Z_t \}$. This section addresses inference questions, including parameter estimation and goodness-of-fit assessment.   Three methods are presented for parameter estimation:  Gaussian pseudo-likelihood, implied Yule-Walker moment methods, and full likelihood.  Gaussian pseudo-likelihood estimators, a time series staple, pretend that the series is Gaussian and maximize its Gaussian-based likelihood.   These estimators only involve the mean and covariance structure of the series, are easy to compute, and will provide a comparative basis for likelihood estimators.   They can also be used as initial guesses in gradient step-and-search likelihood optimizations.  Implied Yule-Walker techniques are moment based estimators applicable to the commonly encountered case where $\{ Z_t \}$ is a causal autoregression.   Likelihood estimators, the statistical gold standard and the generally preferred estimation technique, are based on the PF and SMC methods of the last section. Finally, we will not delve into a detailed statistical inference for the aforementioned methods:  while consistency and asymptotic normality are expected in some of the examined cases (e.g.\ likelihood estimation with an autoregressive $\{ Z_t \}$), a rigorous theoretical treatment is beyond the scope of this paper.

\subsection{Gaussian pseudo-likelihood estimation}
\label{s:inference-lse}
As in Section \ref{s:particles}, we work with observations $x_t$ for the times $t \in \{ 0, \ldots , T \}$ and set $\X = ( x_0, \ldots, x_T)'$.   Denote the likelihood of the model in \eqref{e:Y-GX} by
\begin{equation}
\label{e:full_lik}
{\cal L}_T(\boldsymbol{\theta}, \boldsymbol{\eta}) = \mathbb{P}(X_0 = x_0, X_1 = x_1, \ldots, X_T = x_T) = p(x_{0:T}).
\end{equation}
While this likelihood is a multivariate normal probability, it is difficult to calculate or approximate when $T$ is large.  For most count model classes, true likelihood estimation is difficult to conduct as joint distributions are generally intractable \citep{Davis_etal_2015}.
While Section \ref{s:inference-mle} below devises a well performing PF/SMC likelihood approximation (see also \cite{Smith2012}), we first consider a simple Gaussian pseudo-likelihood (GL) approach.  In a pseudo GL approach, parameters are estimated via
\begin{equation}
\label{e:GL}
(\hat{\btheta},\hat{\bleta} ) = \underset{\btheta,\bleta}{\mbox{argmax}}
\frac{e^{-\frac{1}{2}(\X-\bmu_{\btheta})'\Gamma_T(\btheta,\bleta)^{-1}(\X-\bmu_{\btheta})}}{(2\pi)^{(T+1)/2}|\Gamma_T(\btheta,\bleta)|^{1/2}},
\end{equation}
\noindent where $\boldsymbol{\mu}_{\boldsymbol{\theta}}=(\mu_{\boldsymbol{\theta}}, \ldots, \mu_{\boldsymbol{\theta}})^\prime$ is a $(T+1)$-dimensional constant mean vector. These estimators maximize the series' likelihood assuming the data are Gaussian, each component having mean $\mu_{\btheta}$, and all components having covariance matrix $\Gamma_T(\btheta,\bleta) = (\gamma_{X}(i-j))_{i,j=0}^T$. Time series analysts have been maximizing Gaussian pseudo likelihoods for decades, regardless of the series' marginal distribution, with often satisfactory performance.  The next section and Appendix C present a case where this approach works reasonably well, and one where it does not.  For large $T$, the pseudo GL approach is equivalent to least squares estimation, where the sum of squares $\sum_{t=0}^T (X_t - \EE[X_t| X_0, \ldots, X_{t-1}])^2$ is minimized (see Chapter 8 in \cite{brockwell:davis:1991}). The covariance structure of $\{ X_t \}$ was efficiently computed in Section 2; the mean $\mu_{\btheta}$ is usually explicitly obtained from the marginal distribution $F_X$ posited.  Numerical optimization of \eqref{e:GL} yields a Hessian matrix that can be inverted to obtain standard errors for the model parameters.   These standard errors can be asymptotically corrected for distributional misspecification via the sandwich methods of \cite{Freedman_2006}.

\subsection{Implied Yule-Walker estimation for latent AR models}
\label{s:inference-ils}

Suppose that $\{ Z_t \}$ follows the causal AR$(p)$ model $Z_t = \phi_1 Z_{t-1} + \ldots + \phi_p Z_{t-p} + \varepsilon_t$, where $\{ \varepsilon_t \}$ consists of  IID ${\cal N}(0,\sigma^2_\varepsilon)$ variables.  Here, $\sigma^2_\varepsilon$ depends on the autoregressive coefficients $\phi_1, \ldots, \phi_p$ in a way that induces $\EE[Z_t^2]=1$. The Yule-Walker equations are
\begin{equation}
\label{e:ARp-YW}
	\bphi = \bGamma_p^{-1} \bgamma_p,
\end{equation}
where $\bGamma_p = (\gamma_Z(i-j))_{i,j=1}^p$, $\bgamma_p = (\gamma_Z(1), \ldots, \gamma_Z(p))'$, and $\bphi = (\phi_1, \ldots,\phi_p)'$.   From (\ref{e:Y-X-cor}), note that
\begin{equation}
\label{e:ARp-cov-inv}
	\gamma_Z(h) = L^{-1}(\rho_X(h)),
\end{equation}
the inverse being justified via the strictly increasing nature of $L(u)$ in $u$.

Equations (\ref{e:ARp-YW}) and (\ref{e:ARp-cov-inv}) suggest the following estimation procedure. First, estimate the CDF parameter $\btheta$ directly from the counts; standard methods (e.g.\ method of moments) are typically available for this task. The estimated parameter $\hat{\btheta}$ defines an estimated link $\hat{ L}(u)$ through its estimated power series coefficients. From a numerical power series reversion procedure, one can now efficiently construct the inverse estimator $\hat{L}^{-1}(\rho)$.

Next, in view of (\ref{e:ARp-cov-inv}) and  (\ref{e:ARp-YW}), set
\begin{equation}
\label{e:ARp-cov-inv-est-ARp-YW-est}
	\hat{\gamma}_Z(h) = \hat{L}^{-1}(\hat{\rho}_X(h)),\quad \hat{\bphi} = \hat{\bGamma}_p^{-1} \hat{\bgamma}_p,
\end{equation}
where $\hat{\rho}_X(h)$ is the lag-$h$ sample autocorrelation of $\{ X_t \}$, and $\hat{\bGamma}_p$ and  $\hat{\bgamma}_p$ are defined analogously to the above using $\hat{\gamma}_Z(h)$ in place of $\gamma_Z(h)$.

\subsection{Particle filtering and sequential Monte Carlo likelihoods}
\label{s:inference-mle}
Using \eqref{e:cond expectation} and its notation, the true likelihood in
\eqref{e:full_lik} is
\begin{equation}\label{e:likelihood}
	{\cal L}_T(\btheta,\bleta) = p(x_0) \prod_{s=1}^{T} p(x_s | x_{0:s-1}) = p(x_0) \prod_{s=1}^{T}  \mathbb{E}[ 1_{\{x_s\}}(X_s)|x_{0:s-1}  ] = p(x_0)\prod_{s=1}^{T} \mathbb{E}[ w_s(\hat{Z}_s)|x_{0:s-1}],
\end{equation}
where \eqref{e:cond expectation} was used with $D_{1_{\{x_{s}\}},s}(z) = w_{s}(z)$ and $w_{s}(z)$ is defined and numerically computed akin to \eqref{e:explicit-truncNormal}.
The particle approximation of the likelihood is then
\begin{equation}
\label{e:likelihood-particle}
	\hat{{\cal L}}_T(\btheta,\bleta) =  p(x_0)\prod_{s=1}^{T} \hat{\EE}[ w_s(\hat{Z}_s)|x_{0:s-1} ];
\end{equation}
this uses the notation in \eqref{e:smc-approx} and supposes that the particles are generated by one of the methods in Section \ref{s:particles}.
The approximate PF maximum likelihood estimates satisfy
\begin{equation}
\label{e:mle-particles}
	(\hat{\btheta},\hat{\bleta}) = \mathop{\rm argmax}_{\btheta,\bleta} \hat{{\cal L}}_T(\btheta,\bleta).
\end{equation}

\begin{Remark}
\label{r:lik-sis}
With the SIS algorithm, (\ref{e:likelihood-particle}) reduces to
\begin{equation}
\label{e:likelihood-particle-sis}
\hat {\cal L}_T(\btheta,\bleta) =  p(x_0) \frac{1}{N} \sum_{i=1}^N w_T^i,
\end{equation}
which is consistent with \eqref{e:weight-expectation}.  The work \cite{masarotto:2012} also essentially implements (\ref{e:likelihood-particle-sis}). In contrast to \cite{masarotto:2012}, our approach includes a resampling step in the likelihood approximations, considers other estimation approaches (pseudo GL and implied Yule-Walker), and provides model diagnostic tools more specific to count series (the PIT histograms in Section 3.4 below).
\end{Remark}

To optimize the estimate $\hat{{\cal L}}_T(\btheta,\bleta)$, we employ a large number of particles (growing linearly with $T$)  and common random number (CRN) techniques, a standard practice that serves to smooth $\hat{{\cal L}}_T(\btheta,\bleta)$ somewhat by expressing its random quantities through parameter-dependent transformations of uniform random variables that remain constant for likelihood evaluations across distinct parameters. While the CRN procedure works well in SIS, it fails to ward against discontinuous $\hat{{\cal L}}_T(\btheta,\bleta)$ in our preferred SISR algorithm. An elegant solution to this issue for univariate state processes is proposed in  \cite{malik:2011}: first reorder the (real-valued) particles and then replace the discontinuous resampling CDF with a piecewise linear approximation.  More recent and well performing (but less straightforward) approaches such as the sequential quasi Monte Carlo and the $\mbox{SMC}^2$ algorithm are reviewed in detail in Chapters 13, 14, and 18 of \cite{chopin:2020} (see also the Chapter 19 references on controlled sequential Monte Carlo methods).   We do not pursue these issuees further here.

In our numerical implementations, gradient-free algorithms  from the R package optimx \cite{nash:2011} are used, which follows standard practices in optimizing noisy objective functions. These routines allow for boundary constraints and performed well in modest computing times for our sample sizes. On the other hand, we found less success with the more popular gradient-based quasi-Newton algorithm L-BFGS-B (gradients were computed via finite differences) as convergence instabilities and high-variance estimates were encountered.   However, promising recent developments for optimizing noisy objectives in \cite{berahas:2019} and \cite{shi:2021} were not explored.  A comprehensive investigation of these approaches and of the rich gradient-based SMC inference literature for our framework as in \cite{kantas:2015} is deferred to future work.

\subsection{Model diagnostics}
\label{s:inference-diagnostics}

The goodness-of-fit of count models is commonly assessed through probability integral transform (PIT) histograms and related tools \citep{czado:etal:2009, kolassa:2016}. These are based on the predictive distributions of $\{ X_t \}$, defined at time $t$ by
\begin{equation}
\label{e:predic-distr}
	P_{t}(y) = \PP( {X}_{t} \leq y | X_0=x_0,\ldots,X_{t-1}=x_{t-1}) = \PP(X_t\leq y|x_{0:t-1} ),\quad y \in \{0,1,\ldots\}.
\end{equation}
This quantity can be estimated through the PF/SMC methods in Section \ref{s:particles} as
\begin{equation}
\label{e:predic-distr-est}
	\hat{P}_{t}(y) = \sum_{\ell=0}^y \hat \EE [1_{\{ \ell \}} ( X_{t})|x_{0:t-1}] =
\sum_{\ell=0}^y \hat{\EE}[ D_{1_{\{ \ell \}},t} (\hat{Z}_{t})|x_{0:t-1}],
\end{equation}
which uses \eqref{e:cond expectation-2} and \eqref{e:prediction-approx} and supposes that the particles are generated by the SIS, SISR, or other algorithms. Similar to $D_{1_{\{x_{s}\}},s}(z) = w_{s}(z)$, note that $D_{1_{\{x\}},t}(z) = \widetilde{w}_{x,t}(z)$, where
\begin{equation}
\label{e:prediction-filtering-gf-special-3}
	\widetilde w_{x,t}(z) = \Phi\Big(\frac{\Phi^{-1}(C_{x}) - z}{r_{t}}\Big) - \Phi\Big(\frac{\Phi^{-1}(C_{x-1}) - z}{r_{t}}\Big)
\end{equation}
and $\widetilde w_{x_{t},t}(z) = w_t(z)$.

The (non-randomized) sample mean PIT is defined as
\begin{equation}
\label{e:mean-pit}
	\overline F(u) = \frac{1}{T+1} \sum_{t=0}^T F_t(u|x_t),\quad u\in [0,1],
\end{equation}
where
\begin{equation}
\label{e:mean-pit2}
	F_t(u|y) = \left\{
	\begin{array}{cl}
		0, & \mbox{if}\ u \leq P_t(y-1), \\
		\frac{u-P_t(y-1)}{P_t(y) - P_t(y-1)}, & \mbox{if}\  P_t(y-1) <u < P_t(y), \\
		1, & \mbox{if}\  u \geq P_t(y),
	\end{array}
	\right.
\end{equation}
which is estimated by replacing $P_t$ by $\hat{P}_t$ in practice. The PIT histogram with $H$ bins is defined as a histogram with the height $\overline F(h/H) - \overline F((h-1)/H)$ for bin $h\in \{1,\ldots,H\}$.

Another possibility considers model residuals based on
\begin{equation}
\label{e:res}
\hat{Z}_t = \mathbb{E}[Z_t|X_t=x_t] =
\frac{\exp(-\Phi^{-1}(C_{x_t-1})^2/2)-\exp(-\Phi^{-1}(C_{x_t})^2/2)}{\sqrt{2 \pi}(C_{x_t} - C_{x_t-1})},
\end{equation}
which is the estimated mean of the latent Gaussian process at time $t$ given $X_t$ only (not the entire past), where $\eqref{e:res}$ follows by direct calculations for the model \eqref{e:Y-GX} (using the estimated parameters $\btheta$ of the marginal distribution in the $C_k$s).  For a fitted underlying time series model with parameter $\bleta$, the residuals are then defined as the standard time series residuals $\hat{\epsilon}_t$ of this model fitted to the series $\hat{Z}_t$, after centering by the sample mean.

\subsection{Nonstationarity and covariates}

As discussed in Section \ref{s:cov-1}, covariates can be accommodated by allowing a time-varying parameter $\btheta$ in the marginal distribution.  With covariates, $\btheta$ at time $t$ is denoted by $\btheta(t)$. The GL and PF/SMC procedures are modified for $\btheta(t)$ as follows.

For the GL procedure, the covariance
$\mbox{Cov}(X_{t_1},X_{t_2}) = \mbox{Cov}(G_{\btheta(t_1)}(Z_{t_1}),G_{\btheta(t_2)}(Z_{t_2}))$
is needed, where $G$ is subscripted to signify dependence on $\btheta(t)$.  But as in \eqref{e:Y-X-cov},
\begin{equation}
\label{e:cov-cov}
\mbox{Cov}(X_{t_1},X_{t_2}) = \mbox{Cov}(G_{\btheta(t_1)}(Z_{t_1}),G_{\btheta(t_2)}(Z_{t_2})) =
\sum_{k=1}^\infty k! g_{\btheta(t_1),k} g_{\btheta(t_2),k} \gamma_{Z}(t_1-t_2)^k,
\end{equation}
where again, the subscript $\btheta(t)$ is added to the $g_k$s to indicate dependence on $t$. Numerically, evaluating \eqref{e:cov-cov} is akin to the task in \eqref{e:Y-X-cov}; in particular, both calculations are based on the Hermite coefficients $\{ g_k \}$.

For the PF/SMC approach, the modification is somewhat simpler: one just needs to replace $\btheta$ by $\btheta(t)$ at time $t$ when generating the underlying particles. For example, for the SIS algorithm, $\btheta(t)$ enters only through the $C_x$s in \eqref{e:explicit-truncNormal}, \eqref{e:X0-new}, and \eqref{e:X-update-pred-again}. This is because the covariates enter only through $\btheta$, the parameter controlling marginal distributions.

\section{A simulation study}
\label{s:simulations}
To evaluate our estimation methods, a simulation study considering several marginal distributions and dependence structures was conducted. Here, the classic Poisson count distribution $\mathcal{P}$ is examined (mixed Poisson and negative binomial simulations are presented in Appendix C), with $\{ Z_t \}$ taken from the ARMA$(p,q)$ class. All simulation cases are replicated 200 times for three distinct series lengths:  $T=100, 200$, and $400$.  For notation, estimates of a parameter $\zeta$ from Gaussian pseudo-likelihood (GL), implied Yule-Walker (IYW), and PF/SMC methods are denoted by $\hat{\zeta}_{GL}$, $\hat{\zeta}_{IYW}$, and $\hat{\zeta}_{PF}$, respectively.

We now consider the classical case where $X_t$ has a Poisson marginal distribution for each $t$ with mean $\lambda > 0$. To obtain $X_t$, the AR(1) process $Z_t = \phi Z_{t-1} + (1-\phi^2)^{1/2}\epsilon_t$, was simulated and transformed via \eqref{e:Y-GX} with $F=\mathcal{P}$;  $\EE[ Z_t^2] \equiv 1$ was induced by taking $\mbox{Var}(\epsilon_t) \equiv 1$. Twelve parameter schemes resulting from all combinations of $\lambda \in\{2, 5, 10 \}$ and $\phi \in \{ \pm 0.25, \pm 0.75 \}$ were considered.

Figure \ref{f:pois-ar1-sim-lam2} displays box plots of the parameter estimates when $\lambda=2$. In estimating $\lambda$, all methods perform reasonably well. When the lag-one correlation in $\{ Z_t \}$  (and hence also that in $\{ X_t \}$) is negative (right panel), $\hat{\lambda}_{GL}$, $\hat{\lambda}_{IYW}$, and $\hat{\lambda}_{PF}$ have smaller variability than the positively correlated case (left panel --- note the different y-axis scales on the panels). This is expected: the mean of $\{ X_t \}$ is $\lambda$, and the variability of the sample mean, one good estimator of the mean for a stationary series, is smaller for negatively correlated series than for positively correlated ones.  Note that $\hat{\phi}_{GL}$ is biased toward zero for both negatively and positively correlated
series, whereas $\hat{\phi}_{IYW}$ and $\hat{\phi}_{PF}$ only show bias when $\phi$ is
positive for the sample sizes $T=100$ and $T=200$. Overall, the PF/SMC estimates were
the least biased.  All estimates of $\phi$ have roughly similar variances. Simulations
with $\lambda=5$ and $\lambda=10$ produced analogous results with smaller values of
$\lambda$ yielding less variable estimates. This is again expected as the variance of
the Poisson distribution is also $\lambda$. Graphics of these box plots are omitted for
brevity's sake.

\begin{figure}[h!]
	\centerline{
	\includegraphics[width = 0.49\textwidth]{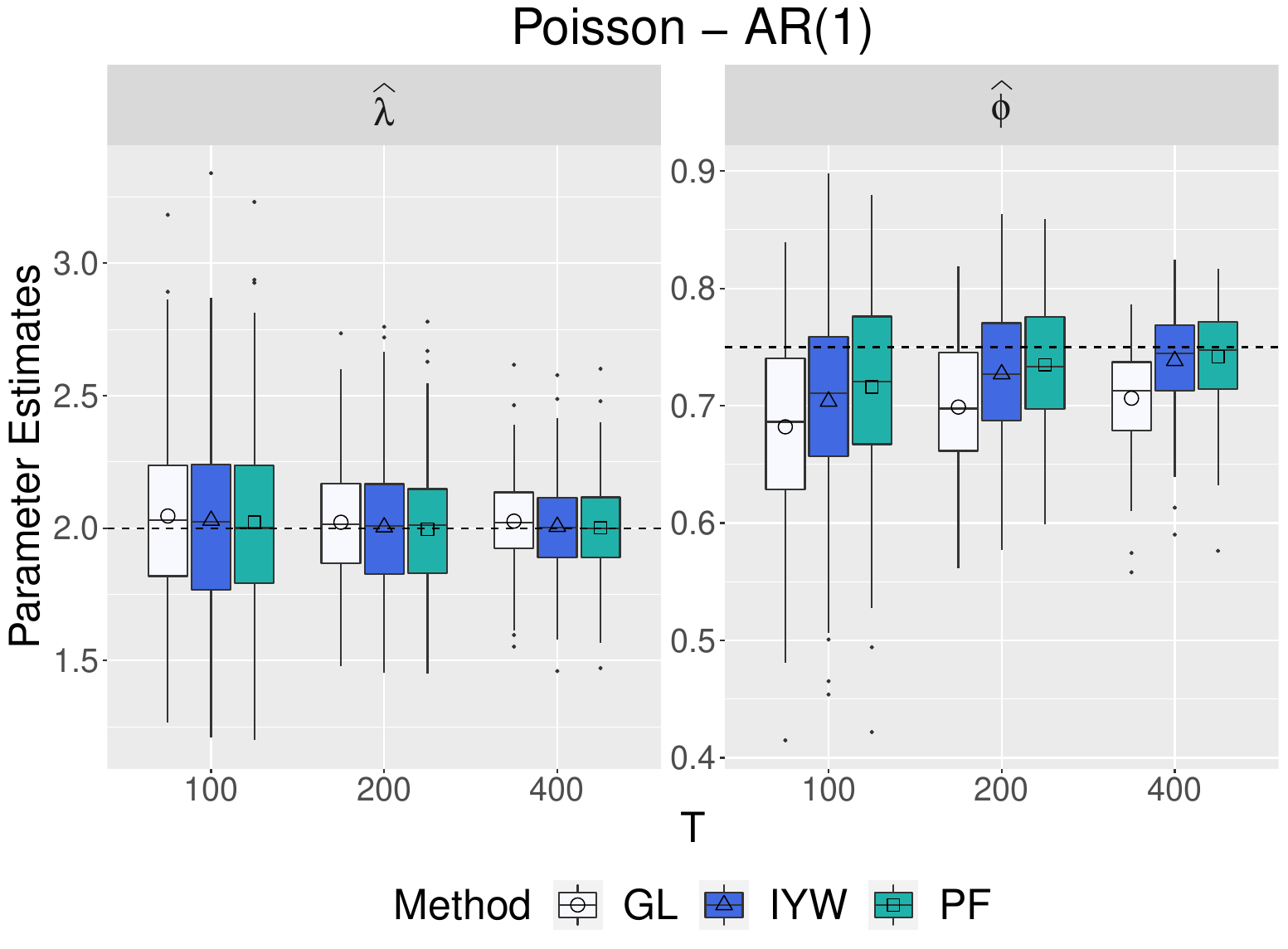}\quad
	\includegraphics[width = 0.49\textwidth]{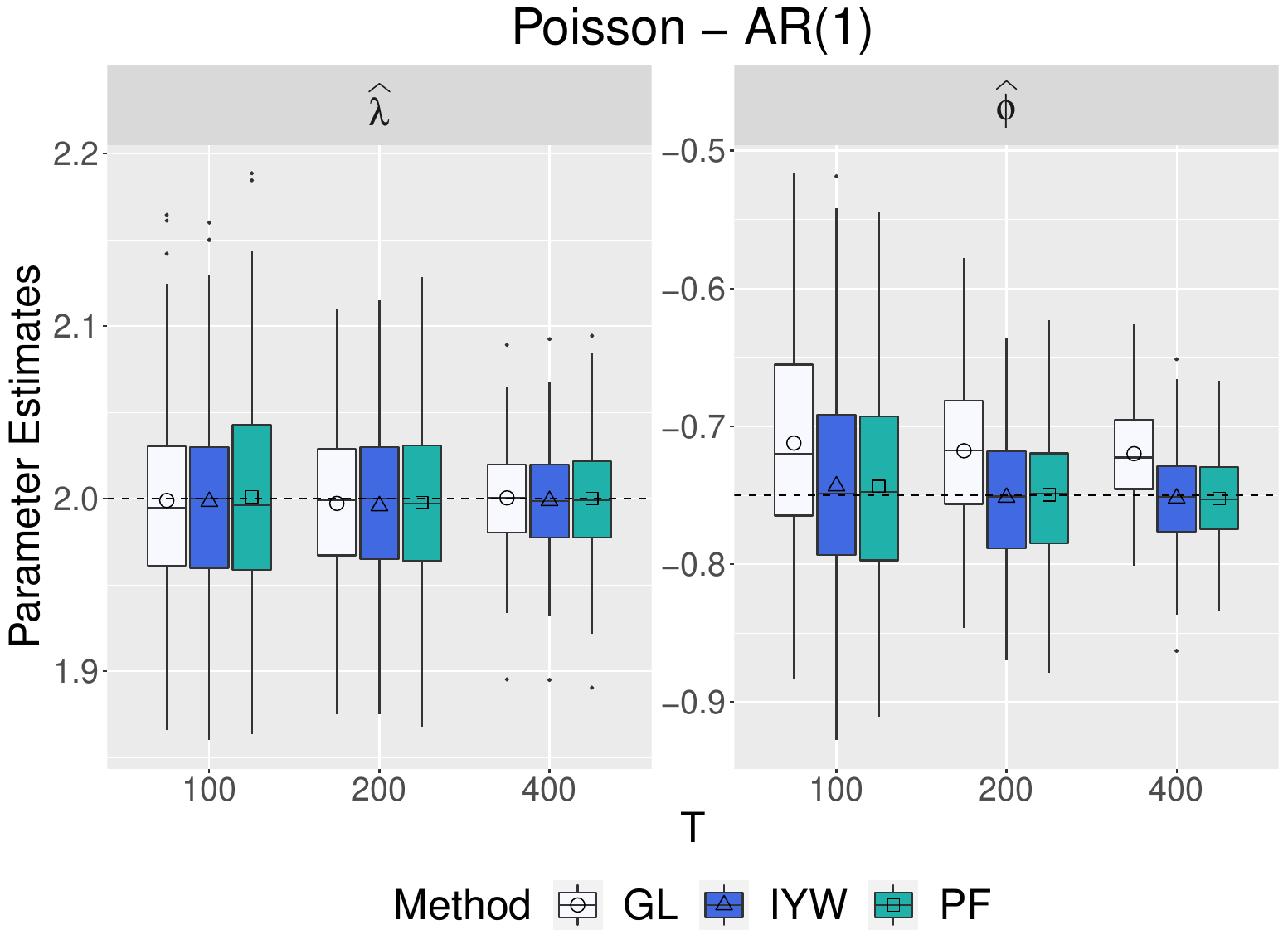}
	}
\caption{\label{f:pois-ar1-sim-lam2} {\small \textit{Gaussian likelihood, implied Yule-Walker, and PF/SMC
parameter estimates for 200 synthetic Poisson--AR(1) series with lengths $T = 100, 200$, and $400$.
The true parameter values (indicated by horizontal dashed lines) are $\lambda=2$ and
$\phi = 0.75$ (left panel), and $\lambda=2$ and $\phi=-0.75$ (right panel).}}}
\end{figure}

\section{An application}
\label{s:applications}

This section applies our methods to a weekly count series of product sales at Dominick’s Finer Foods, a now defunct U.S.~grocery chain that operated in Chicago, IL and adjacent areas from 1918 - 2013. Soft drink sales of an unnamed brand from a single store will be analyzed over a two-year span commencing on September 10, 1989. The series is plotted in Figure \ref{sales} (leftmost plot) and is part of a large and well-studied retail dataset, publicly available at https://www.chicagobooth.edu/research/kilts/datasets/dominicks (Source: The James M. Kilts Center for Marketing, University of Chicago).\footnote{In the dataset manual, the series in Figure \ref{sales} (leftmost plot) is the sales of the product with universal product code (UPC) 4640055081 from store 81.}.
\begin{figure}[h!]
	\includegraphics[width=0.45\textwidth,height=1.5in]{./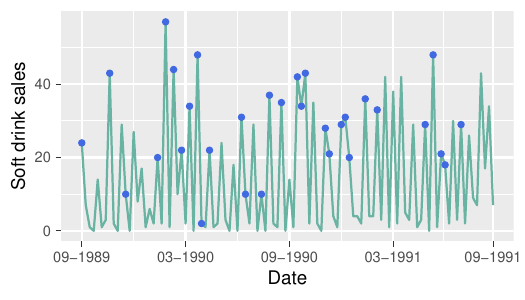}
	\includegraphics[width=0.26\textwidth,height=1.5in]{./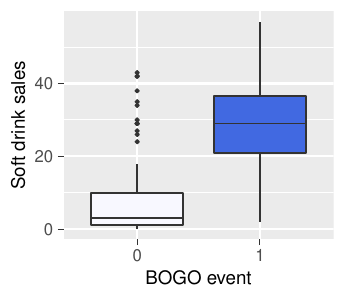}
	\includegraphics[width=0.26\textwidth,height=1.5in]{./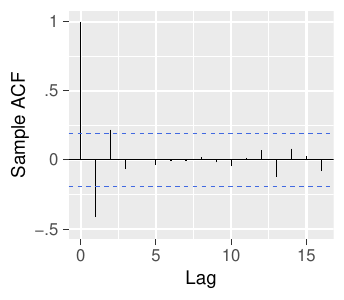}
	\caption{\label{sales} {\small \textit{Left: Weekly sales of a soft drink product sold at a single store of the grocery store
	\textit{Dominick's Finer Foods} from 09-10-1989 to 09-03-1991. The dots indicate the
	weekly sales were at least one ``Buy one and get one free'' (BOGO) sales promotion event took place.
	Middle: Boxplots of sales grouped by the BOGO covariate (0: weekly sales with no BOGO event, 1: weekly sales with at least
	one BOGO day during the week). Right: Sample ACF of the series with 95\% pointwise bands for zero correlation.}}}
\end{figure}
Our goal here  is not an in-depth retail analysis, but to illustrate our methods with a real world example of an overdispersed time series of small counts that has negative autocorrelation and dependence on a covariate.

The covariate we use is a zero-one ``buy one get one free'' (BOGO for short) sales promotion event $S_t$, $S_t=1$ implying that the BOGO promotion was offered at least one day during week $t$.  The dots in the left plot of Figure \ref{sales} signify that the week had at least one BOGO day. The middle plot shows the soft drinks sales distribution grouped by $S_t$, visually suggesting that a BOGO event increases soft drink sales. The rightmost plot shows the sample ACF of the series and reveals negative dependence at lag one. The lag one sample autocorrelation of the residuals after a linear regression of the series on the BOGO covariate is also negative, but comparatively smaller in magnitude.

To model overdispersion, negative binomial and generalized Poisson marginal distributions will be considered.  Although similar, these two distributions can yield different conclusions \citep{joe:2005}.  Following standard generalized linear modeling practice, both distributions are parametrized via the series' mean (although our setup allows covariates to enter through other parameters as well).  More specifically, for the negative binomial marginal, the standard pair $(r,p)$ used in Appendix C is now mapped to the parameter pair $(\mu, k)$, where $\mu = pr/(1-p)$ is the mean of the process and $k=1/r$ is the overdispersion parameter. Similarly, the generalized Poisson distribution of Appendix C is parametrized through the pair $(\mu,\alpha)$ as in \cite{famoye1993}, relation (2.4).   In this parametrization, $\mu$ is the mean of the series, whereas the sign of $\alpha$ controls the type of dispersion, with positive values indicating overdispersion.  To incorporate the BOGO covariate $S_t$ into the model, the mean of the series is allowed to depend on time $t$ through the typical GLM log-link $\mu_t=\exp \left( \beta_0 + \beta_1 S_{t} \right)$, while the parameters $k$ and $\alpha$ are kept fixed in time $t$.

An exploratory examination of the sample ACFs and PACFs of the series along with diagnostic plots of residuals obtained by fitting all ARMA$(p,q)$ models with $p,q \leq 5$ suggest an AR$(3)$ model as a suitable choice for $\{ Z_t \}$. Table 1 in Appendix D shows the AICc and BIC for both marginal distributions obtained via PF/SMC and GL methods (we omit IYW results for simplicity). The AR$(3)$ model was selected by AICc and BIC in both fits.  Interestingly, both the sample ACF and PACF of the series show one large non-zero value at lag one, but relatively smaller values at other lags (except perhaps the lag two value, which barely exceeds the 95\% $1.96/\sqrt{T}$ dashed confidence threshold for zero correlation).

We also considered a white noise latent series (labeled as ``WN'' Table 1 in Appendix D), which renders our model a standard GLM. The PF/SMC WN estimates from both distributions (omitted here for brevity) closely agree with parameter estimates obtained from exact generalized linear model fits (using, for example, functions from the R package ``MASS''). As expected, the WN model yielded the highest AICc and BIC values among all considered dependence structures, thus confirming the need for a model with temporal dependence.

Table \ref{NegBinAR3fit} shows parameter estimates and standard errors from fitting a negative binomial-AR(3). (Table 2 in Appendix D is for a generalized Poisson-AR(3) model.)  All marginal distributions and estimation methods yielded $\hat{\phi}_1<0$. Although a formal asymptotic theory is beyond the scope of our presentation here, asymptotic normality is expected.   Assuming this, the PF/SMC standard errors (the ones believed most trustworthy) suggest that all parameters are significantly non-zero at level $95\%$.  The findings suggest the negative binomial distribution is preferred over the generalized Poisson, that the correlation in the series at lag one is negative, and that a BOGO event indeed increases sales.

\begin{table}[h!]
	\centering
	{\small
	\scalebox{0.9}{
	\begin{tabular}{|c|cccccc|}
	\hline
	Parameters           & $\phi_1$ & $\phi_2$ & $\phi_3$ & $\beta_0$ & $\beta_1$  & $k$\\
	\hline
	GL Estimates         &   -0.447 &  0.145 &   0.208   & 2.433   &  0.569 &   0.884\\
	GL Standard Errors   &    0.175 &  0.171  &  0.130   & 0.095   &  0.115   & 0.207   \\
	\hline
	PF/SMC Estimates        &    -0.341& 0.223   & 0.291    &2.264   & 1.01  &  1.21 \\
	PF/SMC Standard Errors   &    0.100 & 0.107    &0.102   &0.142   &0.207   &  0.205\\
	\hline
	\end{tabular}}}
	\caption{\small \textit{{Estimates and standard errors of the negative binomial-AR(3) model.}}}
	\label{NegBinAR3fit}
\end{table}

Turning to residual diagnostics, the plots in Figure \ref{diagnostics} for the negative binomial-AR(3) fit suggest that the model has captured both the marginal distribution and the dependence structure. The residuals here were computed using \eqref{e:res}.

\begin{figure}[h!]
	\centering
	\includegraphics[height = 1.6in, width=2.1in]{./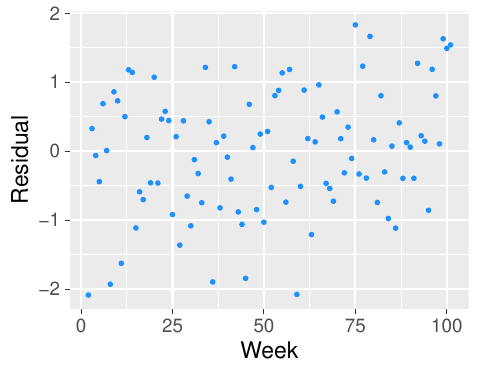}
	\includegraphics[height = 1.6in, width=2.1in]{./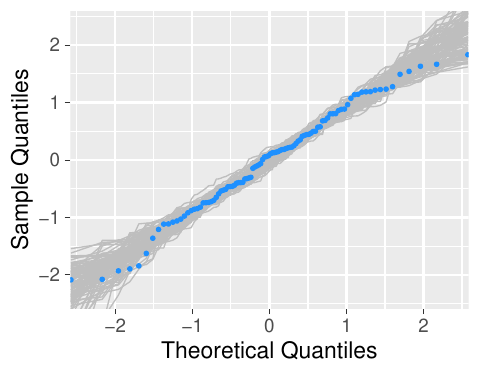}
	\includegraphics[height = 1.6in, width=2.1in]{./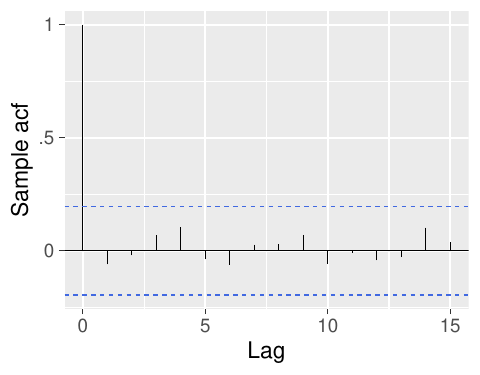}
	\caption{ {\small \textit{The leftmost plot displays the estimated residuals against time. The middle graph is a
	QQ plot for normality of the estimated residuals. The shaded region in the QQ plot shows100 realizations from
	a normal distribution with size, mean and standard deviation matching the residual sample counterparts.
	The right plot displays the sample autocorrelations of the estimated residuals.}}}
	\label{diagnostics}
\end{figure}

We next assess the predictive ability of the two fits via the non-randomized histograms shown in Figure \ref{PIT-2} and discussed in detail in Section \ref{s:inference-diagnostics}. We selected ten bins at the points $h/10, h=1, \ldots, 10$ as is typical in the literature. The negative binomial PIT plot suggests a satisfactory predictive ability with most bar heights being close to 0.1 (1 over the number of bins). In comparison, the generalized Poisson fit deviates more from the uniform distribution, with somewhat more pronounced peaks and valleys. We remind the reader here that PIT plots are known to be sensitive for smaller series lengths.  Quantifying this uncertainty (for each bin) through a statistical test is beyond the scope of this paper.  Nevertheless, we gauged the variability of the uniform distribution's bin heights through a small experiment. Specifically, 500 synthetic realizations of sample size $T=104$ were generated and the percentiles of all bin heights were collected. The 5th and 95th percentiles ranged in the intervals $(0.048, 0.058)$ and $(0.145, 0.154)$ respectively, suggesting that the peaks and valleys of the negative binomial PIT plot (which are within these percentiles) are mild; that is, uniformity is plausible and the marginal distribution fits seems adequate.

\begin{figure}[h!]
	\centering
	\includegraphics[ width=0.4\textwidth]{./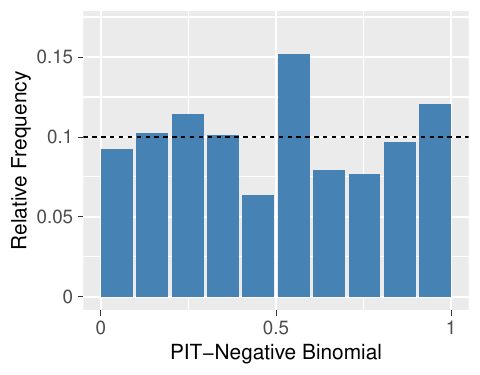}\quad
	\includegraphics[ width=0.4\textwidth]{./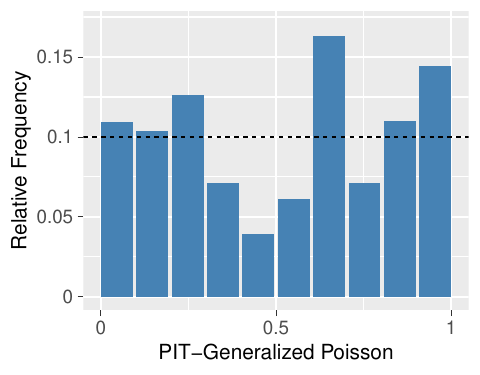}
	\caption{{\small\label{PIT-2} \textit{PIT residual histograms for the estimated models in Table \ref{NegBinAR3fit} and Table 2 in Appendix D.}
	}}
\end{figure}

\section{Conclusions and comments}
\label{s:conclusions}

This paper developed the theory and methods for a stationary count time series model made from a latent Gaussian process.  By using Hermite expansions, a very general model class was devised.  In particular, the autocorrelations in the series can be positive or negative, and in a pairwise sense, span the range of all achievable correlations.   The series can have any marginal distribution desired, thereby improving classical DARMA and INARMA count time series methods.   On inferential levels, autocovariances of the model were extracted from Hermite expansions, allowing for Gaussian pseudo-likelihood and implied Yule-Walker inference procedures.   A PF/SMC likelihood approach was also developed and produced estimators that were demonstrated to outperform the Gaussian pseudo-likelihood and implied Yule-Walker estimators in most cases.  These results complement the importance sampling methods for copula likelihoods in \cite{Smith2012}.  The methods were used in a simulation study and were applied in a regression analysis of a count series of weekly grocery sales that exhibited overdispersion, a negative lag one correlation, and dependence on a ``buy one get one free'' covariate. Model fits and predictive abilities of the methods were illustrated with generalized Poisson and negative binomial marginal distributions.

While the paper provides a reasonably complete treatment for count time series models, additional research is needed.  Some statistical issues, like asymptotic normality of parameter estimators, were not addressed here.  PF/SMC algorithms that optimize model likelihoods, which can be unwieldy, also merit further exploration.   The paper only considers univariate methods.   Multivariate count time series models akin to those in \cite{song2013} could be developed by replacing the univariate $\{ Z_t \}$ with a multivariate Gaussian process $\{ {\bf Z}_t \}$, whose components have a standard normal marginal distribution, but are cross-correlated for each fixed $t$.  The details for such a construction would proceed akin to the methods developed here.  Also, while the count case is considered here, the same methods will produce stationary time series having any general prescribed continuous distribution.  Finally, the same methods should prove useful in constructing spatial and spatio-temporal processes having any prescribed marginal distribution.  While \cite{DeOliveria_2013, HanDeOliveria_2016} recently addressed this issue in the spatial setting, additional work is needed, including exploring spatial Markov properties and likelihood evaluation techniques.   To the best of our knowledge, no comprehensive analogous work has been conducted for space-time count modeling to date.

\newpage
\begin{center}
	{\LARGE \textbf{Supplemental Material}}
\end{center}

There are four sections in this supplement. Sections \ref{a:further-coeff-link} and \ref{a:further-particle} contain further discussions, results, and proofs on Hermite coefficients, link functions, and particle filtering and state space methods. Section \ref{a:simulations} complements Section \ref{s:simulations} with additional simulation scenarios and Section \ref{a:application-tables} presents two auxiliary tables mentioned in the Section \ref{s:applications} application.
\appendix
\numberwithin{equation}{section}

\section{More on Hermite coefficients and link functions}
\label{a:further-coeff-link}

\subsection{Calculation and properties of the Hermite coefficients}
\label{s:model-herm-coef}
We first prove Lemma \ref{r:coeff-conv}.

\medskip

\noindent {\bf Proof of Lemma \ref{r:coeff-conv}:}
Recall that $\boldsymbol{\theta}$ denotes all parameters appearing in the marginal distribution $F_X$.  For $\boldsymbol{\theta}$ fixed, define the mass and cumulative probabilities of $F_X$ via
\begin{equation}
\label{e:Y-probs}
p_{n} = \PP[X_t = n], \quad C_{n} = \PP[ X_t \leq n] = \sum_{j=0}^n p_{j}, \quad n \in \{ 0,1,\ldots \},
\end{equation}
where dependence on $\boldsymbol{\theta}$ is notationally suppressed.  Note that
\begin{equation}
\label{e:Y-GX-G-again}
G(z) = \sum_{n=0}^\infty n\, 1_{\{ C_{n-1} \leq \Phi(z) <  C_{n}\}} = \sum_{n=0}^\infty n\, 1_{\big[\Phi^{-1}(C_{n-1}),\Phi^{-1}(C_{n})\big)}(z)
\end{equation}
(take $C_{-1}=0$ as a convention). When $C_n=0$, we take $\Phi^{-1}(C_n) = -\infty$; when $C_n=1$, we take $\Phi^{-1}(C_n) = \infty$. Using this in (\ref{e:herm-coef}) provides, for $k \geq 1$,
\begin{equation}
\label{e:herm-coeff}
g_{k} = \frac{1}{k!} \EE [G(Z_0) H_k(Z_0) ]= \frac{1}{k!} \sum_{n=0}^\infty n \EE \left[ 1_{\big[\Phi^{-1}(C_{n-1}),\Phi^{-1}(C_{n})\big)}(Z_0)H_k(Z_0) \right].
\end{equation}
Plugging the Hermite polynomials
\[
H_k(z) = (-1)^k e^{z^2/2} \frac{d^k}{dz^k} \left( e^{-z^2/2} \right), \quad z \in \RR,
\]
into (\ref{e:herm-coeff}) and simplifying provides
\begin{eqnarray}
g_k &=& \frac{1}{k!} \sum_{n=0}^\infty \frac{n}{\sqrt{2\pi}}
\int_{\Phi^{-1}(C_{n-1})}^{\Phi^{-1}(C_{n})} H_k(z) e^{-z^2/2} dz \nonumber \\
 &=&  \frac{1}{k!} \sum_{n=0}^\infty \frac{n}{\sqrt{2\pi}}
\int_{\Phi^{-1}(C_{n-1})}^{\Phi^{-1}(C_{n})} (-1)^k \Big( \frac{d^k}{dz^k} e^{-z^2/2}\Big) dz \nonumber \\
&=& \frac{1}{k!} \sum_{n=0}^\infty\frac{n}{\sqrt{2\pi}} (-1)^k \Big( \frac{d^{k-1}}{dz^{k-1}} e^{-z^2/2}\Big) \Big|_{z=\Phi^{-1}(C_{n-1})}^{\Phi^{-1}(C_{n})}  \nonumber \\
&=& \frac{1}{k!}\sum_{n=0}^\infty \frac{n}{\sqrt{2\pi}} (-1) e^{-z^2/2} H_{k-1}(z) \Big|_{z=\Phi^{-1}(C_{n-1})}^{\Phi^{-1}(C_{n})}  \nonumber \\
	& = & \frac{1}{k!\sqrt{2\pi}} \sum_{n=0}^\infty  n\, \Big[
	e^{-\Phi^{-1}(C_{n-1})^2/2} H_{k-1}(\Phi^{-1}(C_{n-1}))  - \nonumber \\
	& & \hspace{2.5in} e^{-\Phi^{-1}(C_{n})^2/2} H_{k-1}(\Phi^{-1}(C_{n}))
	\Big].\quad
\label{e:herm-coef-expr1}
\end{eqnarray}
The telescoping nature of the series in (\ref{e:herm-coef-expr1}) provides \eqref{e:herm-coef-expr2}.

Next, we discuss the convergence of this series.  Observe that one obtains (\ref{e:herm-coef-expr2}) from (\ref{e:herm-coef-expr1}) if, after changing $k-1$ to $k$ for notational simplicity,
\begin{equation}
\label{e:herm-coef-cond}
\sum_{n=0}^\infty e^{-\Phi^{-1}(C_{n})^2/2} \Big|H_k(\Phi^{-1}(C_{n})) \Big| <\infty.
\end{equation}	
To see that this holds when $\EE[ X_t^p ] < \infty$ for some $p > 1$, suppose that $C_n<1$ for all $n$, since otherwise the sum in (\ref{e:herm-coef-cond}) has a finite number of terms.  As $H_k(z)$ is a polynomial of degree $k$, $|H_k(z)|\leq \kappa(1+|z|^k)$ for some constant $\kappa$ that depends on $k$. The sum in (\ref{e:herm-coef-cond}) can hence be bounded (up to a constant) by
\begin{equation}
\label{e:bound-1}
\sum_{n=0}^\infty e^{-\Phi^{-1}(C_n)^2/2}(1+|\Phi^{-1}(C_n)|^k).
\end{equation}
To show that (\ref{e:bound-1}) converges, it suffices to show that
\begin{equation}
\label{e:bound-2}
\sum_{n=0}^\infty e^{-\Phi^{-1}(C_n)^2/2}|\Phi^{-1}(C_n)|^k < \infty
\end{equation}
since $|\Phi^{-1}(C_n)|^k \uparrow \infty$ as $C_n\uparrow 1$. Mill's ratio for a standard normal distribution states that $1-\Phi(x) \sim e^{-x^2/2}/(\sqrt{2\pi}x)$ as $x\to\infty$.  Substituting $x = \Phi^{-1}(y)$ gives $1-y\sim  e^{-\Phi^{-1}(y)^2/2}/(\sqrt{2\pi}\Phi^{-1}(y))$ as $y \uparrow 1$. Taking logarithms in the last relation and ignoring constant terms, order arguments show that $\Phi^{-1}(y) \sim \sqrt{2} |\log(1-y)|^{1/2}$ as $y \uparrow 1$. Substituting $\Phi^{-1}(C_n) \sim \sqrt{2} |\log(1-C_n)|^{1/2}$ into (\ref{e:bound-2}) provides
\begin{equation}
\label{e:herm-coef-cond-bound}
\sum_{n=0}^\infty e^{-\Phi^{-1}(C_n)^2/2}|\Phi^{-1}(C_n)|^k \leq \sum_{n=0}^\infty (1-C_{n}) |\log (1-C_n)|^{k/2} .
\end{equation}
	
For any $\delta>0$ and $x\in (0,1)$, one can verify that $-\log (x) \leq x^{-\delta}/\delta$. Using this in (\ref{e:herm-coef-cond-bound}) and $C_n=1-\PP[X>n]$, it suffices to prove that
\begin{equation}
\label{e:delta-inequality-2}
\sum_{n=0}^{\infty} \PP[X>n]^{1-\delta k /2} < \infty
\end{equation}
for some $\delta >0$. Since $X \geq 0$ and $\EE[X^p] < \infty$ is assumed, the Markov inequality gives $\PP[X>n] = \PP[X^p>n^p] \leq \EE[X^p]/n^p$. Thus, the sum in (\ref{e:delta-inequality-2}) is bounded by
\begin{equation}
\label{e:delta-inequality-3}
\EE[X^p]^{1-\delta k /2} \sum_{n=0}^{\infty} \frac{1}{n^{p-p\delta k/2}}.
\end{equation}
But (\ref{e:delta-inequality-3}) converges whenever $\delta < 2(p-1)/(pk)$. Choosing such a $\delta$ proves (\ref{e:herm-coef-cond}) and finishes our work. \quad\quad $\Box$

The following remarks and the next section shed light on the behavior of the Hermite coefficients in (\ref{e:herm-coef-expr2}).
	
\begin{Remark}
\label{r:coeff-calcul}
From a numerical standpoint, the expression in (\ref{e:herm-coef-expr2}) is evaluated as follows. The families of marginal distributions considered in this work have fairly ``light'' tails, meaning that $C_{n}$ approaches unity rapidly as $n \to \infty$.  This means that $C_{n}$ becomes {\it exactly unity numerically} for small to moderate values of $n$. Let $n(\boldsymbol{\theta})$ be the smallest such value. For example, for the Poisson distribution with parameter $\boldsymbol{\theta}=\lambda$ and Matlab software, $n(0.1) = 10$, $n(1) = 19$, and $n(10) = 47$. For $n \geq  n(\boldsymbol{\theta})$, the numerical value of $\Phi^{-1}(C_{n})$ is infinite and the terms $e^{-\Phi^{-1}(C_{n})^2/2} H_{k-1}(\Phi^{-1}(C_{n}))$ in (\ref{e:herm-coef-expr2}) are numerically zero and can be discarded. Thus, (\ref{e:herm-coef-expr2}) becomes
\begin{equation}
\label{e:herm-coef-expr-numer}
g_{k} = \frac{1}{k!\sqrt{2\pi}} \sum_{n=0}^{n(\boldsymbol{\theta})-1} e^{-\Phi^{-1}(C_{n})^2/2} H_{k-1}(\Phi^{-1}(C_{n})).
\end{equation}
Alternatively, one could calculate the Hermite coefficients using Gaussian quadrature methods, as discussed e.g.\ in \cite{HanDeOliveria_2016}, p.~51; however, the approach based on (\ref{e:herm-coef-expr-numer}) is numerically simpler. Furthermore, as noted below, the expression (\ref{e:herm-coef-expr-numer}) can shed further light on the behavior of the Hermite coefficients.
\end{Remark}
	
\begin{Remark}
\label{r:coeff-asympt-k-to-infinity}
Assuming that the $g_{k}$ are evaluated through (\ref{e:herm-coef-expr-numer}), their asymptotic behavior as $k \to \infty$ can be quantified. We focus on $g_{k}(k!)^{1/2}$, whose squares are the link coefficients. The asymptotic relation for Hermite polynomials states that $H_m(x) \sim e^{x^2/4} (m/e)^{m/2} \sqrt{2} \cos(x \sqrt{m} - m\pi/2)$ as $m \to \infty$ for each $x \in \RR$.  Using this and Stirling's formula ($k! \sim k^k e^{-k} \sqrt{2 \pi k}$ as $k \to \infty$) show that
\begin{equation}
\label{e:coeff-asympt-k-to-infinity}
g_{k} (k!)^{1/2} \sim \frac{1}{2^{1/4}\pi^{3/4}} \frac{1}{k^{3/4}}
\sum_{n=0}^{n(\boldsymbol{\theta})-1} e^{-\Phi^{-1}(C_{n})^2/4}
\cos\left(\Phi^{-1}(C_{n}) \sqrt{k-1} - \frac{(k-1)\pi}{2}\right).
\end{equation}
Numerically, this approximation, which does not involve Hermite polynomials, was found to be accurate for even moderate values of $k$. It implies that $k! g_{k}^2$ decays (up to a constant) as $k^{-3/2}$. While this might seem slow, these coefficients are multiplied by $\gamma_{Z}(h)^k=\rho_Z(h)^k$ in (\ref{e:Y-X-cov}), which decay geometrically in $k$ to zero, except in degenerate cases where $|\rho_Z(h)|=1$.
\end{Remark}
	
\begin{figure}[t!]
\centerline{
\includegraphics[width=3.25in,height=2.5in]{./figs/coeffs_poiss_new}
\includegraphics[width=3.25in,height=2.5in]{./figs/coeffs_negbin3_new}
}
	\caption{\label{f:coeffs-pois-nb} \textit{The link coefficients $\ell_{k}$ on a log-vertical scale for the
	Poisson (left) and negative binomial (right) distributions.}}
	\end{figure}
	
The computation and behavior of the link coefficients $\ell_{k}= k! g_k^2/\gamma_X(0)$ in (\ref{e:Y-X-cor-func}) are now examined for several families of marginal distributions (recalled in the beginning of  Section \ref{a:simulations}). Figure \ref{f:coeffs-pois-nb} shows plots of $\ell_{k}$ on a vertical log scale over a range of parameter values for $k\in\{1,\ldots,5\}$ for the Poisson and negative binomial (with $r=3$) distributions. A number of observations are worth making.
	
Since $\sum_{k=1}^\infty \ell_{k} = 1$ and $\ell_k \geq 0$ by construction, the parameter values in Figure \ref{f:coeffs-pois-nb} with $\log(\ell_{1})$ close to $0$ (or $\ell_{1}$ close to $1$) implies that most of the ``weight" in the link coefficients is contained in the first coefficient, with higher order coefficients being considerably smaller and decaying with increasing $k$. This takes place in the approximate ranges $\lambda>1$ for the Poisson distribution and $p\in (0.1,0.9)$ in the negative binomial distribution with $r=3$. Such cases will be called {\it ``condensed''}. As shown in Section \ref{s:model-link-func} below, $L(u)$ in the condensed case is close to $u$.   In the condensed case, correlations in $\{ Z_t \}$ and $\{ X_t \}$ are similar.
	
Non-condensed cases are referred to as {\it ``diffuse''}.  Here, weight is spread to many link coefficients. This happens in the approximate ranges $\lambda<1$ for the Poisson distribution and $p < 0.1$ and $p>0.9$ for the negative binomial distribution with $r=3$. This is expected for small $\lambda$s and small $p$s:  these cases correspond to discrete random structures that are nearly degenerate in the sense that they concentrate at $0$ (as $\lambda \to 0$ or $p \to 0$). For such cases, large negative correlations, such as $L(-1)$, are not possible; hence, $L(u)$ cannot be close to $u$ and correlations in $\{ Z_t \}$ and $\{ X_t \}$ are different. The diffuse range $p>0.9$ for the negative binomial distribution remains to be understood, although it is likely again some form of degeneracy.
	
\subsection{Calculation and properties of link functions}\label{s:model-link-func}
	
We now discuss calculation of $L(u)$ in (\ref{e:Y-X-cor-func}), which requires truncation of the sum to $k \in \{ 1, \ldots, K \}$ for some $K$. Note that the link coefficients $\ell_{k}$ are multiplied by $\gamma_{Z}(h)^k= \rho_Z(h)^k$ in (\ref{e:Y-X-cov}) before they are summed, and the latter decays to zero geometrically rapidly in $k$ for most stationary $\{ Z_t \}$ when $h \neq 0$. The link coefficients for large $k$ are therefore expected to play a minor role. We now set $K=25$ and explore consequences of this choice.
	
\begin{Remark}
An alternative procedure would bound (\ref{e:coeff-asympt-k-to-infinity}) by
\[
(2\pi^3 k^3)^{-1/4} \sum_{n=0}^{n(\boldsymbol{\theta})-1} e^{-\Phi^{-1}(C_{n})^2/4}.
\]
 Now let $K=K(\boldsymbol{\theta})$ be the smallest $k$ for which this bound is smaller than some preset error tolerance $\epsilon$. In the Poisson case with $\epsilon=0.01$, for example, such $K$ are $K(0.01)=29, K(0.1) = 27$, and $K(1) = 25$.  These are close to the chosen value of $K=25$. A different bound and resulting truncation in the spatial context can be found in \cite{HanDeOliveria_2016}, Lemma 2.2.
\end{Remark}
	
\begin{figure}[t!]
\centerline{
\includegraphics[width=3.25in,height=2.5in]{./figs/func_poiss_new}
\includegraphics[width=3.25in,height=2.5in]{./figs/func_negbin3_new}
}
\caption{\label{f:coeffs-pois-nb-hz} \textit{The link function $L$ for the Poisson distribution with
$\lambda=0.1$, $1$, and $10$ (left) and the negative binomial distribution with $r=3$ and $p=0.1$, $0.5$,
and $0.95$ (right).}}
\end{figure}
	
Figure \ref{f:coeffs-pois-nb-hz} plots $L(u)$ (solid line) for the Poisson and negative binomial distributions for several parameter values. The link function is computed by truncating its expansion to $k \leq 25$ as discussed above. The condensed cases $\lambda=10$ and $\lambda=1$ (perhaps this case is less condensed) and $p=0.85$ lead to curves with $L(u) \approx u$.  However, the diffuse cases appear more delicate. Diffusivity and truncation of the infinite series in (\ref{e:Y-X-cor-func}) lead to a computed link function that does not have $L(1)=1$ (see Section \ref{s:model-rel-acvfs}); in this case, one should increase the number of terms in the summation.
	
Though deviations from $L(1)=1$ might seem large (most notably for the negative binomial distribution with $p=0.95$), this seems to arise only in the more degenerate cases associated with diffusivity; moreover, this occurs only when linking an ACF of $\{ Z_t \}$ for lags $h$ for which $\rho_{Z}(h)$ is close to unity.  For example, note that if the link deviation is $0.2$ from unity at $u=1$ (as it is approximately for the negative binomial distribution with $p=0.95$), the error for linking $\rho_{Z}(h)$ as 0.8 (or smaller but positive) would be no more than $0.2 (0.8)^{26} \approx 0.0006$. In practice, any link deviation could be partially corrected by adding one extra ``pseudo link coefficient", in our case, a 26th coefficient, which would make the link function pass through $(1,1)$. The resulting link function is depicted in the dashed line in Figure \ref{f:coeffs-pois-nb-hz} around the point $(1,1)$ and essentially coincides with the original link function for all $u$'s except possibly for $u$ values that are close to unity.
	
The situation for negative $u$ and, in particular, around $u=-1$ is different:  the theoretical value of $L(-1)$ in Section \ref{s:model-rel-acvfs} is not explicitly known.  However, a similar correction could be achieved by first estimating $L(-1)$ through a Monte-Carlo simulation and adding a pseudo 26th coefficient making the computed link function connect to the desired value at $u=-1$. This is again depicted for negative $u$ via the dashed lines in Figure \ref{f:coeffs-pois-nb-hz}, which is visually distinguishable only near $u=-1$ (and then only in some cases).  Again, one cannot have a count series whose lag one correlation is more negative than $L(-1)$ --- such a count series does not exist by Remark \ref{r:max-min-corr}.
	
\begin{Remark}
In our estimation work, a link function needs to be evaluated multiple times; hence, running Monte-Carlo simulations to evaluate $L(-1)$ can become computationally expensive. In this case, the estimation procedure is fed precomputed values of $L(-1)$ on a grid of parameter values and interpolation is used for intermediate parameter values.
\end{Remark}
	
The next result further quantifies the link function's structure. The result implies that $\rho_X(h)$ is nondecreasing as a function of $\rho_Z(h)$.  The link's strict monotonicity is known from \cite{Grigoriu_2007} when $G$ is non-decreasing and differentiable, which does not hold in our case. (Non-strict) monotonicity for arbitrary non-decreasing $G$ is also argued in \cite{Cairo_Nelson_1997}. Our argument extends strict monotonicity to our setting and identifies an explicit form for the link function's derivative.
	
\begin{Proposition}
\label{p:link-derivative}
Let $L(\cdot)$ be the link function in (\ref{e:Y-X-cor-func}). Then, for $u \in (-1,1)$,
\begin{equation}
\label{e:link-derivative}
L'(u) = \frac{1}{2\pi \gamma_X(0) \sqrt{1-u^2}} \sum_{n_0=0}^\infty \sum_{n_1=0}^\infty e^{-\frac{1}{2(1-u^2)}\big( \Phi^{-1}(C_{n_0})^2 + \Phi^{-1}(C_{n_1})^2 - 2 u \Phi^{-1}(C_{n_0}) \Phi^{-1}(C_{n_1})\big)}.
\end{equation}
In particular, $L(u)$ is monotone increasing for $u \in (-1,1)$.
\end{Proposition}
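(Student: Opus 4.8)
The plan is to differentiate the series defining $L$ term by term, but it is cleaner to work with $g(u) = \gamma_X(0) L(u) = \sum_{k\ge 1} k!\, g_k^2\, u^k$ and recover $L' = g'/\gamma_X(0)$ at the end. The starting point is the bivariate version of the Hermite identity: if $(Z_0, Z_h)$ is standard bivariate Gaussian with correlation $u$, then $\EE[G(Z_0)G(Z_h)] = \sum_{k\ge 0} k!\, g_k^2\, u^k$, i.e. $g(u) = \EE[G(Z_0)G(Z_h)] - (\EE G(Z_0))^2$. So $g'(u)$ should equal $\frac{d}{du}\EE[G(Z_0)G(Z_h)]$, and the task is to compute this derivative directly from the joint density and match it to the claimed double sum. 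I would first justify differentiating under the expectation/integral sign: for $u$ in a compact subinterval of $(-1,1)$ the bivariate normal density $\phi_u(x,y)$ and its $u$-derivative are dominated uniformly, and $G$ has at most polynomial growth (indeed $\EE[G(Z_0)^2] = \gamma_X(0) + (\EE G(Z_0))^2 < \infty$, with the moment hypothesis of Lemma \ref{r:coeff-conv} in force), so dominated convergence applies. Equivalently, one can differentiate the power series for $g(u)$ term by term on $|u| \le r < 1$, since the radius of convergence is $1$ by the decay of $k! g_k^2$ noted in Remark \ref{r:coeff-asympt-k-to-infinity}; I would likely do whichever is shortest to write.

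The key computational identity is the classical fact that the bivariate normal density satisfies the diffusion-type relation $\frac{\partial}{\partial u}\phi_u(x,y) = \frac{\partial^2}{\partial x \partial y}\phi_u(x,y)$, where $\phi_u(x,y) = \frac{1}{2\pi\sqrt{1-u^2}}\exp\!\big(-\frac{x^2 - 2uxy + y^2}{2(1-u^2)}\big)$. Using this and writing $G(x) = \sum_{n_0} n_0\, 1_{[\Phi^{-1}(C_{n_0-1}), \Phi^{-1}(C_{n_0}))}(x)$ from (\ref{e:Y-GX-G-again}), I would compute
\[
g'(u) = \iint G(x) G(y)\, \frac{\partial^2}{\partial x\partial y}\phi_u(x,y)\, dx\, dy = \iint G'(x)\, G'(y)\, \phi_u(x,y)\, dx\, dy
\]
after integrating by parts once in each variable (the boundary terms vanish because $\phi_u$ decays and $G$ grows only polynomially). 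Here $G'$ is the distributional derivative of the step function $G$, namely the sum of point masses $G' = \sum_{n\ge 0}\big(\text{jump of } G \text{ at } \Phi^{-1}(C_n)\big)\,\delta_{\Phi^{-1}(C_n)}$; since $G$ jumps by exactly $1$ at each finite $\Phi^{-1}(C_n)$ (and the infinite endpoints contribute nothing), $G' = \sum_{n\ge 0}\delta_{\Phi^{-1}(C_n)}$. Substituting this into the double integral collapses it to the double sum $\sum_{n_0}\sum_{n_1}\phi_u(\Phi^{-1}(C_{n_0}), \Phi^{-1}(C_{n_1}))$, which is exactly $\gamma_X(0)$ times the right-hand side of (\ref{e:link-derivative}). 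Dividing by $\gamma_X(0)$ gives the formula; strict positivity of every summand (all exponentials are positive, and there is at least the $n_0=n_1=0$ term provided $F_X$ is nondegenerate) then yields $L'(u) > 0$, hence $L$ is monotone increasing on $(-1,1)$.

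The main obstacle is making the integration-by-parts argument rigorous given that $G$ is only piecewise constant, so $G'$ is a sum of Dirac masses rather than an ordinary function — the manipulation $\iint G G\, \partial_x\partial_y \phi = \iint dG\, dG\, \phi$ needs to be carried out either via a genuine Lebesgue–Stieltjes integration-by-parts, or by approximating $G$ by smooth functions and passing to the limit, or by justifying Fubini and the term-by-term differentiation of the power series and then identifying the resulting sum $\sum_k k\cdot k!\, g_k^2\, u^{k-1}$ with the double sum through the Mehler kernel expansion $\sum_k \frac{u^k}{k!} H_k(x) H_k(y) \cdot \frac{e^{-x^2/2}e^{-y^2/2}}{2\pi} = \phi_u(x,y)$. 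I would favor the Mehler-kernel route, differentiating Mehler's formula in $u$ and pairing against $G$ in each variable, since it sidesteps distributional derivatives: one gets $g'(u) = \sum_k \frac{k\, u^{k-1}}{k!}\big(\int G(x) H_k(x)\, d\Phi(x)\big)^2 \cdot k!$ and independently, from the closed form of $\int G H_k\, d\Phi = k! g_k$ via (\ref{e:herm-coef-expr2}), one assembles the double sum; but either way the one delicate point is interchanging the infinite sums/integrals, which is handled by the absolute convergence established in Lemma \ref{r:coeff-conv} together with $|u| < 1$. Continuity of $L'$ and the uniform-on-compacts convergence then also re-confirm that $L$ is $C^1$ on $(-1,1)$, tying back to the continuity claim used in Remark \ref{r:max-min-corr}.
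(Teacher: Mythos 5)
Your favored (Mehler-kernel) route is correct but genuinely different from the paper's. The paper proceeds via Price's theorem, $L'(u) = \gamma_X(0)^{-1}\,\EE[G'(Z_0)G'(Z_1)]$, which does not apply directly because $G$ is a step function; it therefore smooths $G$ to $G_\epsilon(x)=\EE[G(x+\epsilon U)]$, applies Price to $G_\epsilon$ (whose derivative telescopes into a sum of Gaussian kernels centred at the points $\Phi^{-1}(C_n)$), and then carries out a careful $\epsilon\downarrow 0$ limit by comparing the Hermite coefficients $g_{\epsilon,k}$ with $g_k$ — this is exactly the ``approximate $G$ by smooth functions'' option you list and then set aside. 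Your route instead differentiates the power series $\gamma_X(0)L(u)=\sum_k k!\,g_k^2u^k$ term by term (legitimate on $|u|\le r<1$), substitutes the closed form $k!\,g_k=(2\pi)^{-1/2}\sum_n e^{-\Phi^{-1}(C_n)^2/2}H_{k-1}(\Phi^{-1}(C_n))$ from (\ref{e:herm-coef-expr2}), and resums over $k$ via Mehler's formula $\sum_m \frac{u^m}{m!}H_m(a)H_m(b)\,\frac{e^{-(a^2+b^2)/2}}{2\pi}=\phi_u(a,b)$, which collapses directly to the stated double sum and avoids the $\epsilon$-limit entirely; what it costs is a justification of the interchange of $\sum_k$ with $\sum_{n_0,n_1}$, which follows from Cram\'er's bound $e^{-x^2/4}|H_m(x)|\le K\sqrt{m!}$ together with $\sum_n e^{-\Phi^{-1}(C_n)^2/4}<\infty$ (a slightly stronger tail requirement than the $p>1$ moment of Lemma \ref{r:coeff-conv}, but one already needed for the double sum in (\ref{e:link-derivative}) to converge as $|u|\to 1$, so neither proof loses anything). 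Two small points: the trailing factor $k!$ in your displayed expression for $g'(u)$ is spurious, since $\int G H_k\,d\Phi=k!\,g_k$ already yields $\sum_k k\,k!\,g_k^2u^{k-1}$ without it; and your first sketch, integrating by parts against the distributional derivative $G'=\sum_n\delta_{\Phi^{-1}(C_n)}$, is the right heuristic but, as you yourself note, requires a regularization to be rigorous — the paper's Gaussian smoothing is precisely that regularization.
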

	
\noindent {\bf Proof of Proposition \ref{p:link-derivative}:} We first derive the expression (\ref{e:link-derivative}) informally and then furnish the technicalities. When $G(\cdot)$ in (\ref{e:Y-GX}) is continuous and differentiable, the derivative of the link function can be obtained from the Price Theorem (Theorem 5.8.5 in \cite{pipiras:2017}); namely, for $u\in (-1,1)$,
\begin{equation}
\label{e:price}
L'(u) = \frac{1}{\gamma_X(0)} \EE [ G'(Z_0) G'(Z_1) ] \Big|_{\mbox{\scriptsize Corr}(Z_0,Z_1)=u}
\end{equation}
(the notation indicates that the correlation between the standard Gaussian pair $(Z_0,Z_1)$ is $u$).   If $G$ is further nondecreasing, then $G^\prime(x) \geq 0$ for all $x$ and (\ref{e:price}) implies that $L^\prime (u) \geq 0$ for all $u$.  This is the argument in \cite{Grigoriu_2007}.  While our $G$ is nondecreasing, it can be seen to be piecewise constant via (\ref{e:Y-GX-G-again}) and is hence not differentiable at its jump points.

To remedy this, we approximate $G$ by differentiable functions, apply  (\ref{e:price}), and take limits in the approximation error. Executing on this, for $\epsilon >0$ and $U\stackrel{\cal D}{=}{\cal N}(0,1)$, set
\begin{eqnarray}
G_\epsilon(x) & = & \EE [ G(x + \epsilon U)] = \int_{- \infty}^\infty G(z)
\frac{e^{-\frac{(x-z)^2}{2\epsilon^2}}}{\sqrt{2 \pi} \epsilon} dz \nonumber \\
& = & \sum_{n=0}^\infty n \int_{\Phi^{-1}(C_{n-1})}^{\Phi^{-1}(C_n)}
\frac{e^{-\frac{(x-z)^2}{2\epsilon^2}}}{\sqrt{2 \pi} \epsilon} dz \nonumber \\
& = & \sum_{n=0}^\infty n \int_{\Phi^{-1}(C_{n-1})-x}^{\Phi^{-1}(C_n)-x}
	\frac{e^{-\frac{w^2}{2\epsilon^2}}}{\sqrt{2 \pi}\epsilon} dw,
\label{e:G-epsilon}
\end{eqnarray}
where the expression in (\ref{e:Y-GX-G-again}) was substituted for $G(z)$. As $\epsilon \downarrow 0$, $G_\epsilon(x)$ approximates $G(x)$ since the ``kernel'' $e^{-\frac{(x-z)^2}{2\epsilon^2}}/(\sqrt{2\pi}\epsilon)$ acts like Dirac's delta function $\delta_{\{x\}}(z)$ at $z=x$.  Let $L_\epsilon$ be the link function induced by $G_\epsilon$, and $X_t^{(\epsilon)} = G_\epsilon(Z_t)$ its corresponding time series. Observe that $G_\epsilon(x)$ is nondecreasing and is differentiable by (\ref{e:G-epsilon}) with derivative
\begin{equation}
\label{e:G-epsilon-derivative}
G_\epsilon'(x) = \frac{1}{\sqrt{2\pi}\epsilon} \sum_{n=0}^\infty n \Big[ e^{-\frac{(\Phi^{-1}(C_{n-1})-x)^2}{2\epsilon^2}} -  e^{-\frac{(\Phi^{-1}(C_{n})-x)^2}{2\epsilon^2}} \Big] = \frac{1}{\sqrt{2\pi}\epsilon} \sum_{n=0}^\infty e^{-\frac{(\Phi^{-1}(C_{n})-x)^2}{2\epsilon^2}},
\end{equation}
where the last step uses the telescoping nature of the series, justifiable from the finiteness of  $\EE[ X_t^p]$ for some $p > 1$ analogously to (\ref{e:herm-coef-expr1}) and (\ref{e:herm-coef-expr2}). Formula (\ref{e:price}) now yields
\begin{eqnarray}
L_\epsilon'(u) & = & \frac{1}{\gamma_{X^{(\epsilon)}}(0)} \EE [G_\epsilon'(Z_0) G_\epsilon'(Z_1) ] \Big|_{\mbox{\scriptsize Corr}(Z_0,Z_1)=u} \nonumber \\
& = & \frac{1}{\gamma_{X^{(\epsilon)}}(0)}
\int_{-\infty}^\infty
\int_{-\infty}^\infty G_\epsilon'(z_0) G_\epsilon'(z_1) \frac{1}{2\pi\sqrt{1-u^2}} e^{-\frac{1}{2(1-u^2)}\big(z_0^2 + z_1^2 - 2 u z_0 z_1\big)} dz_0dz_1 \nonumber \\
& = & \frac{1}{\gamma_{X^{(\epsilon)}}(0)} \sum_{n_0=0}^\infty \sum_{n_1=0}^\infty \int_{-\infty}^\infty  \int_{- \infty}^\infty
\frac{1}{\sqrt{2\pi}\epsilon} e^{-\frac{(\Phi^{-1}(C_{n_0})-z_0)^2}{2\epsilon^2}}
\frac{1}{\sqrt{2\pi}\epsilon} e^{-\frac{(\Phi^{-1}(C_{n_1})-z_1)^2}{2\epsilon^2}} \times \nonumber \\
& & \quad \quad  \ \frac{1}{2\pi\sqrt{1-u^2}} e^{-\frac{1}{2(1-u^2)}\big(z_0^2 + z_1^2 - 2 u z_0 z_1\big)} dz_0dz_1.
\end{eqnarray}
Noting again that  $e^{-\frac{(x-z)^2}{2\epsilon^2}}/(\sqrt{2\pi}\epsilon)$ acts like a Dirac's delta function $\delta_{\{x\}}(z)$, the limit as $\epsilon \downarrow 0$ should be
\begin{equation}
\label{e:link-derivative-again}
L'(u) = \frac{1}{\gamma_X(0)}  \sum_{n_0=0}^\infty \sum_{n_1=0}^\infty \frac{1}{2\pi \sqrt{1-u^2}}
e^{-\frac{1}{2(1-u^2)}\big( \Phi^{-1}(C_{n_0})^2 + \Phi^{-1}(C_{n_1})^2 - 2 u \Phi^{-1}(C_{n_0}) \Phi^{-1}(C_{n_1})\big)},
\end{equation}
which is (\ref{e:link-derivative}) and is always non-negative.   Note that the derivative of $L$ always exists in $u \in (-1,1)$ since $L(u)$ is a power series with positive coefficients that sum to unity.

The formal justification of (\ref{e:link-derivative-again}) proceeds as follows. We focus only on proving that $L_\epsilon'(u)$ converges to $L'(u)$, which is the most difficult step. For this, we first need an expression for the Hermite coefficients of $G_\epsilon(\cdot)$, denoted by $g_{\epsilon,k}$.   These will be compared to the Hermite coefficients $g_k$ of $G$. Using $H_k(x+y) = \sum_{\ell=0}^k {k \choose \ell} y^{k-\ell} H_\ell(x)$, note that
\begin{eqnarray*}
G_\epsilon(z) & = &  \EE [G(x+\epsilon U)] =
\EE \left[ \sum_{k=0}^\infty g_k H_k(x+\epsilon U) \right] \\
& = &  \EE \left[ \sum_{k=0}^\infty g_k \sum_{\ell=0}^k {k \choose \ell} (\epsilon U)^{k-\ell} H_\ell(x) \right] \\
& = & \sum_{\ell=0}^\infty H_\ell(x) \sum_{k=\ell}^\infty g_k \epsilon^{k-\ell} {k \choose \ell}  \EE[U^{k-\ell}].
\end{eqnarray*}
Then, after changing summation indices and using that $\EE[U^p] = 0$ if $p$ is odd, and equal to $(p-1)!!$ if $p$ is even, where $k!!=1 \times 3 \times \cdots \times k$ when $k$ is odd, we get
\begin{equation}
\label{e:gk-epsilon}
g_{\epsilon,k} = g_k + \sum_{q=1}^\infty g_{k+2q} \epsilon^{2q} {k+2q \choose k} (2q-1)!! =  g_k + \sum_{q=1}^\infty g_{k+2q} \epsilon^{2q}\frac{(k+2q)!}{k!2^qq!}.
\end{equation}
This implies that
\begin{equation}
\label{e:gk-epsilon-gk}
|g_k^2 - g_{k,\epsilon}^2| \leq 2|g_k| \sum_{q=1}^\infty |g_{k+2q}| \epsilon^{2q}\frac{(k+2q)!}{k!2^qq!} + \Big(\sum_{q=1}^\infty |g_{k+2q}| \epsilon^{2q}\frac{(k+2q)!}{k!2^qq!} \Big)^2.
\end{equation}
The Cauchy-Schwarz inequality gives the bound
$$
\sum_{q=1}^\infty |g_{k+2q}| \epsilon^{2q}\frac{(k+2q)!}{k!2^qq!} \leq
\left( \sum_{q=1}^\infty g_{k+2q}^2 (k+2q)! \right)^{1/2}
\left( \sum_{q=1}^\infty \epsilon^{4q} \frac{(k+2q)!}{(k!)^2(2^qq!)^2} \right)^{1/2}
$$
$$
\leq \frac{M_k}{(k!)^{1/2}}
\left(  \sum_{q=1}^\infty \epsilon^{4q} \frac{(k+2q)!}{k!(2q)!} \right)^{1/2},
$$
where $M_k$ is some finite constant that converges to zero as $k \rightarrow \infty$.   Here, we have used that $\sum_{q=1}^\infty g_{k+2q}^2 (k+2q)! \to 0$ as $k \rightarrow \infty$, which is justifiable from $\mbox{\rm Var}(X_t) = \gamma_{X}(0) = \sum_{k=1}^\infty k! g_{k}^2$, and the fact that $(2^qq!)^2$ is of the same order as $(2q)!$. To bound sums of the form $\sum_{p=1}^\infty \epsilon^{2p} {k+p \choose p}$, use $\sum_{p=0}^\infty x^{p} {k+p\choose p} = (1-x)^{-k-1}$, $|x|<1$.   Collecting the above bounds and returning to (\ref{e:gk-epsilon-gk}) gives
\begin{equation}
\label{e:gk-epsilon-gk-2}
|g_k^2 - g_{k,\epsilon}^2| \leq  \frac{2 M_k|g_k|}{(k!)^{1/2}} \left[ (1-\epsilon^2)^{-k-1} -1 \right]^{1/2} + \frac{M_k^2}{k!} [(1-\epsilon^2)^{-k-1} -1].
\end{equation}

The rest of the argument is straightforward with this bound; in particular, note from (\ref{e:Y-X-cor-func}) that
$$
L'(u) = \sum_{k=1}^\infty \frac{g_k^2k!}{\gamma_X(0)} ku^{k-1},
$$
where the series converges for $u\in (-1,1)$ since the ``extra'' $k$ gets dominated by $u^{k-1}$. Similarly,
$$
L_\epsilon'(u) = \sum_{k=1}^\infty \frac{g_{\epsilon,k}^2k!}{\gamma_{X^{(\epsilon)}}(0)} ku^{k-1}.
$$
Then,
\begin{equation}\label{e:L-Lepsilon}
|L'(u) - L_\epsilon'(u)| \leq \Big| \frac{1}{\gamma_X(0)} - \frac{1}{\gamma_{X^{(\epsilon)}}(0)} \Big|  \sum_{k=1}^\infty g_k^2k!  k|u|^{k-1} + \frac{1}{\gamma_{X^{(\epsilon)}}(0)} \sum_{k=1}^\infty |g_k^2 - g_{\epsilon,k}^2|k! k|u|^{k-1}.
\end{equation}
For example, the series in the last bound converges to $0$ as $\epsilon \downarrow 0$. Indeed, by using (\ref{e:gk-epsilon-gk-2}), this follows if
$$
\sum_{k=1}^\infty |g_k| (k!)^{1/2} \left[ (1-\epsilon^2)^{-k-1} -1 \right]^{1/2} k|u|^{k-1} \to 0,\quad
\sum_{k=1}^\infty [(1-\epsilon^2)^{-k-1} -1] k|u|^{k-1} \to 0.
$$
In both of these cases, the convergence follows from the dominated convergence theorem since $(1-\epsilon^2)^{-k-1} -1\to 0$ as $\epsilon \downarrow 0$. By using $\mbox{\rm Var}(X_t) = \gamma_{X}(0) = \sum_{k=1}^\infty k! g_{k}^2$, one can similarly show that $\gamma_{X^{(\epsilon)}}(0)\to \gamma_X(0)$. Hence, we conclude that $L_\epsilon'(u)\to L'(u)$ as $\epsilon \downarrow 0$. \quad\quad $\Box$

\begin{Remark}
The antiderivative
$$
\int
\frac{\exp{ \left[ - \frac{a^2+b^2 -2uab}{2(1-u^2)}\right] }}
{ \sqrt{1-u^2}} du
$$
does not seem to have a closed form expression for general $a,b \in \RR$. (If it did, then one could integrate (\ref{e:link-derivative}) explicitly and get a closed form expression for $L(u)$.) But a number of numerical ways to evaluate the above integral over a finite interval have been studied; see, for example, \cite{genz:2004numerical}, Section 2.
\end{Remark}

\section{Particle filtering and sequential Monte Carlo methods}
\label{a:further-particle}

The next three remarks connect our model and sequential importance sampling (SIS) algorithm from Section \ref{s:particles} to the GHK sampler, hidden Markov models (HMMs), state space models (SSMs), and particle filetering (PF) and sequential Monte Carlo methods (SMCs).
\begin{Remark}
\label{r:GHK}
Note that, by using \eqref{e:truncation set}--\eqref{e:truncation set-2},
\begin{eqnarray}
\PP[X_0=x_0,\ldots , X_t=x_T] &=& \PP[Z_0 \in A_{x_0},\ldots, Z_T \in A_{x_T}] \nonumber \\
&=& \int_{ \{ A_{x_s}, s=0, \ldots, T \}} \frac{ e^{-\frac{1}{2}\sum_{s=0}^{T} (z_s - \hat{z}_s)^2/r_s^2 }}
{(2\pi)^{(T+1)/2} r_0 \ldots r_{T}} dz_0 \ldots dz_{T}. \quad \label{e:integral}
\end{eqnarray}
By \eqref{e:weight-expectation}, the truncated integral \eqref{e:integral} over a multivariate Gaussian density is (up to $\PP[X_0=x_0]$) equal to $\mathbb{E}[w_t^i]$, which by using SIS particles, is approximated by the sample average of
$$
w_T^i
= \prod_{t=1}^T w_t(\hat Z_t^i)
=  \prod_{t=1}^T
\left[
\Phi
\left( \frac{\Phi^{-1}(C_{x_t})-\hat Z_t^i}{r_t}\right)
-
\Phi
\left( \frac{ \Phi^{-1}(C_{x_t-1})-\hat Z_t^i}{r_t}\right)
\right].
$$
Using the sample average of $w_T^i$ to approximate the truncated multivariate integral \eqref{e:integral}, the SIS procedure can be viewed as the popular GHK sampler (\citep{hajivassiliou:ruud:1994}, p.\ 2405). Our contribution is to note that the likelihood can be expressed through the normal integral \eqref{e:integral}, involving one-step-ahead predictions and their errors, that can be efficiently computed through standard techniques from the time series literature. The GHK sampler is also used in \cite{masarotto:2012}, p.\ 1528, and \cite{HanDeOliveria_2018}.
\end{Remark}

\begin{Remark}
\label{r:HMM}
When $\{ Z_t \}$ is an AR$(p)$, $(Z_t, \ldots, Z_{t-p+1})'$ is a Markov chain on $\RR^p$, $\{ X_t\}$ defined by $ X_t= G(Z_t)$ is a SSM  or HMM (the same conclusion applies to ARMA$(p,q)$ models with an appropriate state space enlargement).  Indeed, when $p=1$, the AR$(1)$ model with a unit variance can be written as $Z_t = \phi Z_{t-1} + (1-\phi)^{1/2}\epsilon_t$, where $|\phi|<1$ and $\{\epsilon_t\}$ consists of IID ${\cal N}(0,1)$ random variables. Then $\{ X_t \}$ is an HMM in the sense of Definition 9.3 in \cite{douc:2014} with a Markov kernel on $\RR$ of
\begin{equation}
\label{e:ar1-kernelX}
M(z,dz') =
\frac{e^{-\frac{(z'-\phi z)^2}{2(1-\phi^2)}} dz}
{\sqrt{2\pi (1-\phi^2)}}
\end{equation}
governing transitions of $\{ Z_t \}$, and a Markov kernel from $\RR$ to $\NN_0$, serving as the state equation, of
\begin{equation}
\label{e:ar1-kernelXY}
G(z,dx) = \delta_{G(z)}(dx) = \mbox{point mass at}\ G(z)
\end{equation} 	
governing the connection between $Z_t$ and $X_t$. Thus, many HMM developments (see e.g.\ Chapters 9--13 in \cite{douc:2014}) apply to our model for Gaussian AR($p$) $\{ Z_t \}$. One important feature of our model when viewed as an HMM is that it is {\it not} partially dominated (in the sense described following Definition 9.3 in \cite{douc:2014}). Though a number of developments described in \cite{douc:2014} apply or extend easily to partially non-dominated models (as in the next remark), additional issues remain.
\end{Remark}

\begin{Remark}
\label{r:HMM-filtering}
When our model is an HMM with, for example, the underlying Gaussian AR$(1)$ series as in the preceding remark, the algorithm described in (\ref{e:X0-new})--(\ref{e:X-update-pred-again}) is the SIS algorithm discussed in Section 10.2 of \cite{douc:2014} with the choice of the optimal kernel and the associated weight function in Eqs.\ (10.30) and (10.31) of \cite{douc:2014}. This can be seen from the following observations. For an AR$(1)$ series, the one-step-ahead prediction is $\hat Z_{t+1} = \phi Z_{t}$ (and $\hat{z}_{t+1} = \phi z_{t}$). Though as noted in the preceding remark, our HMM model is not partially dominated and hence a transition density function $g(z,x)$ (defined following Definition 9.3 of \cite{douc:2014}) is not available, a number of formulas for partially dominated HMMs given in \cite{douc:2014} also apply to our model by taking
\begin{equation}
\label{e:ar1-g}
g(z,k) = 1_{A_k}(z).
\end{equation}
This is the case for the developments in Section 10.2 on SIS in \cite{douc:2014}. For example, one could check with (\ref{e:ar1-g}) that the filtering distribution of $\phi_t$ in Eq.\ (10.23) of \cite{douc:2014} is exactly that in (\ref{e:cond expectation}). The kernel $Q_t(z,A)$ appearing in Section 10.2 of \cite{douc:2014} is then
\begin{equation}
\label{e:kernel-Q}
Q_t(z,A) = \int_A M(z,dz') g(z',x_t) = \int_{A \cap A_{x_t}}
\frac{e^{-\frac{(z'-\phi z)^2}{2(1-\phi^2)}} dz'}{\sqrt{2 \pi(1-\phi^2)}},
\end{equation}
where (\ref{e:ar1-kernelX}) and (\ref{e:ar1-g}) were used. Sampling $Z_t^i$ from the kernel $Q_t(Z_{t-1}^i,\cdot)/Q_t(Z_{t-1}^i,\RR)$ (see p.\ 330 in \cite{douc:2014}) can be shown to be equivalent to defining $Z_t^i$ through Steps 1 and 2 of our algorithm in (\ref{e:X0-new})--(\ref{e:X-update-pred-again}). The optimal weight function $Q_t(z,\RR)$ can also be checked to be that in (\ref{e:explicit-truncNormal}) above.
\end{Remark}

\section{Additional simulations}
\label{a:simulations}

This section expands the simulation study of Section \ref{s:simulations}.   For the reader's convenience, we first list common count marginal distribution forms, some of which are used in our simulations.
\begin{itemize}
\item Binomial (Bin($N,p$)): $\PP [ X_t =k] =  {N \choose k} p^k (1-p)^{N-k}$, $k \in \{0, \ldots, N \}$, $p \in (0,1)$;
\item Poisson (Pois($\lambda$)): $\PP[ X_t =k] = e^{-\lambda} \lambda^k/k!$, with $\lambda>0$;
\item Mixture Poisson (mixPois($\boldsymbol{\lambda},\boldsymbol{p}$)): $\PP[ X_t =k] = \sum_{m=1}^M p_m e^{-\lambda_m} \lambda_m^k/k!$, where $\boldsymbol{p} = (p_1,\ldots,p_M)^\prime$ with the mixture probabilities $p_m>0$ such that $\sum_{m=1}^M p_m=1$ and $\boldsymbol{\lambda} = (\lambda_1,\ldots,\lambda_M)^\prime$ with $\lambda_m>0$ for each $m$;
\item Negative binomial (NB($r,p$)): $\PP[X_t=k]=\frac{\Gamma(r+k)}{k!\Gamma(r)}(1-p)^r p^k$, with $r > 0$ and $p \in (0,1)$;
\item Generalized Poisson (GPois($\lambda,\eta$)): $\PP[X_t =k] = e^{-(\lambda+\eta k)} \lambda(\lambda+\eta k)^{k-1}/k!$, with $\lambda > 0$ and $\eta \in [0,1)$;
\item Conway-Maxwell-Poisson (CMP$(\lambda,\nu)$): $\PP[X_t=k]= \frac{\lambda^k}{(k!)^\nu C(\lambda, \nu)}$, with $\lambda> 0$, $\nu > 0$, and a normalizing constant $C(\lambda,\nu)$ making the probabilities sum to unity.
\end{itemize}
The CDFs of the mixture Poisson and negative binomial distributions are denoted below by $\mathcal{MP}$ and $\mathcal{NB}$, respectively. As in Section  \ref{s:simulations}, estimates of a parameter $\zeta$ from Gaussian pseudo-likelihood (GL), implied Yule-Walker (IYW), and particle filtering (PF/SMC) methods are denoted by $\hat{\zeta}_{GL}$, $\hat{\zeta}_{IYW}$, and $\hat{\zeta}_{PF}$, respectively.

\subsection{Mixed Poisson AR(1)}
\label{s:simulations-mixedpoisson-ar1}
Consider the three-parameter mixture Poisson marginal distribution with parameters $\lambda_1 >0, \lambda_2>0$, and $p \in [0,1]$, and probability mass function as defined above.  As in Section \ref{s:simulations}, the count series was obtained by transforming the AR(1) process $Z_t = \phi Z_{t-1} + (1-\phi^2)^{1/2}\epsilon_t$ via \eqref{e:Y-GX} with $F=\mathcal{MP}$. Eight parameter schemes that consider all combinations of $\lambda_1=2$, $\lambda_2 \in \{ 5, 10 \}$, $p= 0.25$, and $\phi=\{ \pm 0.25, \pm 0.75 \}$ are studied.

Figure \ref{f:mixedpois-ar1-sim} shows box plots of the parameter estimates for $\phi=0.75$ and $\lambda_2=5$ or 10 (left or right panels, respectively). To ensure parameter identifiability, $p$ was constrained to lie in $(0, 1/2)$. In the $\lambda_2=5$ case (left panel), PF/SMC methods outperform GL and IYW approaches, yielding smaller biases and variances for most parameter choices and all sample sizes. The only exception occurs with $\lambda_2$, where $\hat{\lambda}_{2,GL}$ were moderately superior to $\hat{\lambda}_{2,PF}$ and $\hat{\lambda}_{2,IYW}$ for $T=100$ and  $T=200$; however for $T=400$, PF/SMC performs well, having little bias and the smallest variance of the three methods.   IYW produced significantly smaller biases than GL in estimating $\lambda_1$ and $p$, but both methods estimate $\phi$ somewhat biasedly.  IYW also displays larger variances for estimates of $\lambda_1, \lambda_2$, and $p$ when $T$ is smaller.

In the $\lambda_2=10$ case (right panel), where bimodality features are more pronounced, the GL method performs (as one might expect) quite poorly. Here, the probability that $X_t$ is close to its mean value of $p \lambda_1 + (1-p) \lambda_2$ is actually quite small, but GL overestimates it as the mode of the corresponding Gaussian distribution with that mean. In contrast, the PF/SMC approach ``feels the entire joint distribution of the process", outperforming the IYW and GL approaches across the board. IYW also does reasonably well in this setting, although not quite as good as PF/SMC.

\begin{figure}[h!]
\centerline{
\includegraphics[width = 0.5\textwidth]{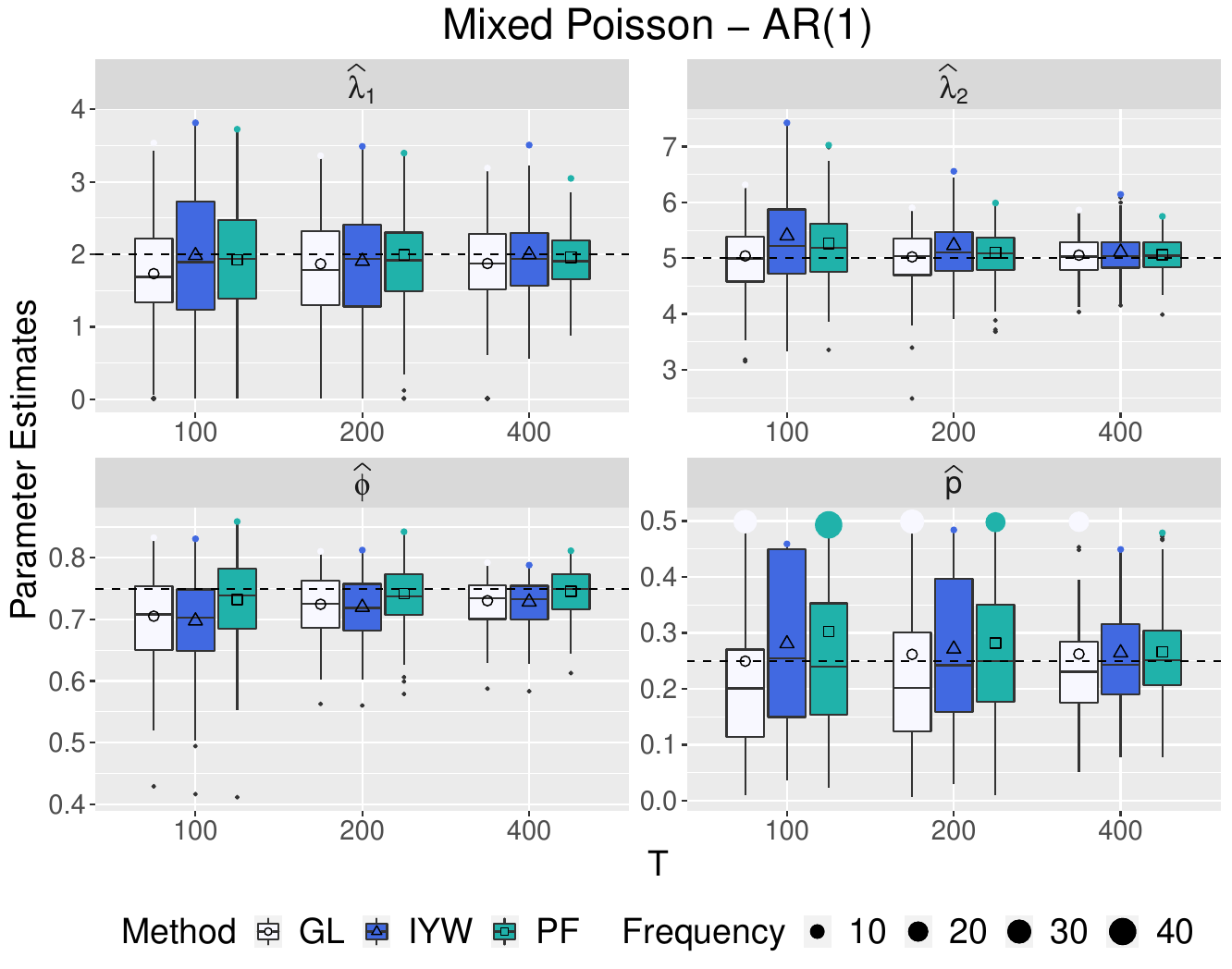}
\includegraphics[width = 0.5\textwidth]{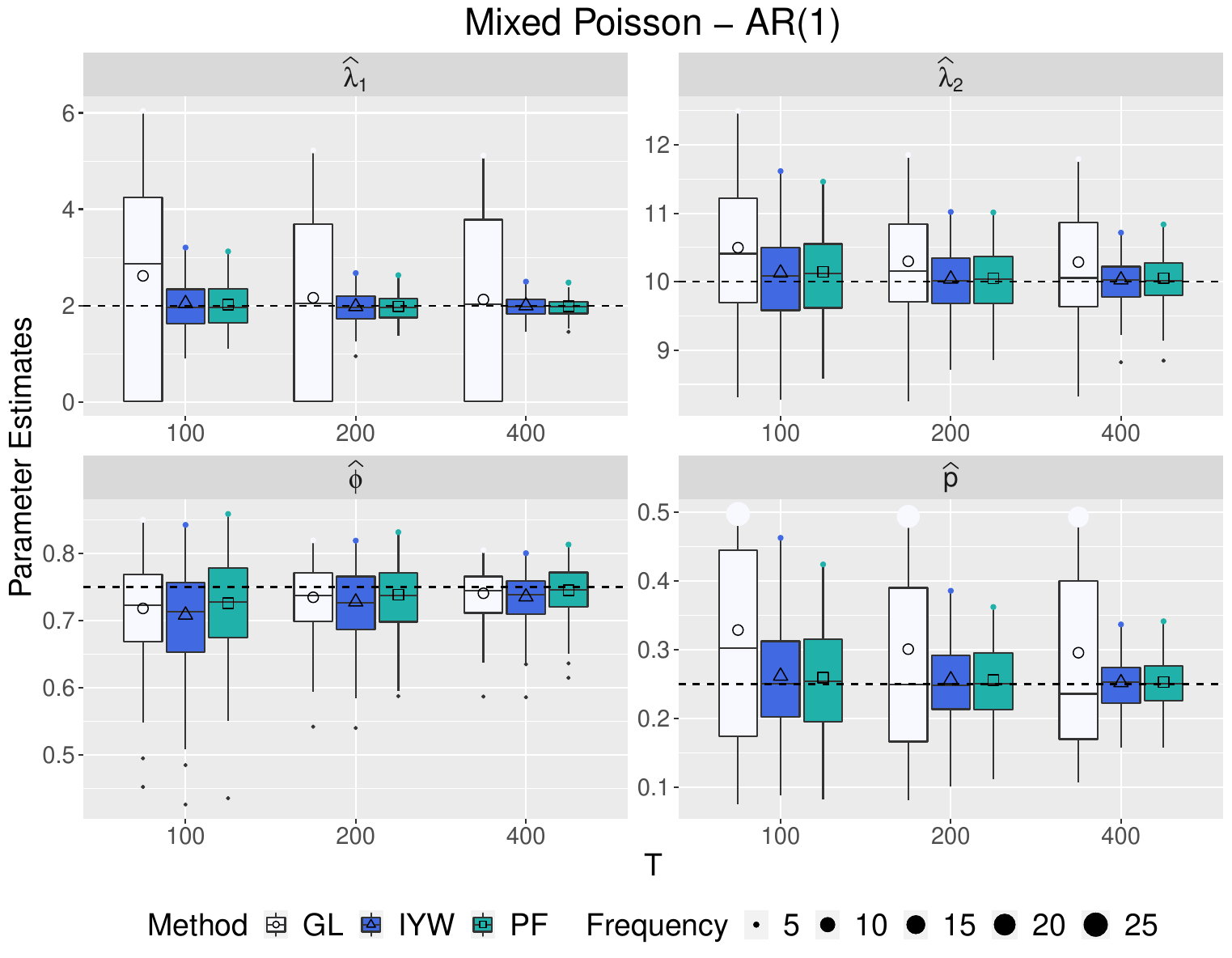}
}
\caption{\label{f:mixedpois-ar1-sim}
\textit{Gaussian likelihood, implied Yule-Walker, and PF/SMC parameter estimates for 200 synthetic mixed Poisson AR(1) series of lengths $T = 100, 200$, and $400$. The true parameter values (indicated by the black horizontal dashed lines) are $\lambda_1=2$, $\lambda_2=5$, $\phi = 0.75$, and $p=1/4$ (left panel) and $\lambda_1=2$, $\lambda_2=10$, $\phi = 0.75$, and $p=1/4$ (right panel).}}
\end{figure}

\subsection{Negative binomial MA(1)}
\label{s:simulations-negbinom-ma1}
Our final case considers the negative binomial distribution with parameters $r>0$, $p \in (0,1)$, and probability mass function as defined above.  To obtain $X_t$, the MA(1) process
\begin{equation}
Z_t = \epsilon_t + \theta \epsilon_{t-1},
\end{equation}
was simulated and transformed via \eqref{e:Y-GX} with $F=\mathcal{NB}$; $\EE[ Z_t^2] \equiv 1$ was induced by taking $\mbox{Var}(\epsilon_t) = (1+\theta^2)^{-1}$. Eight parameter schemes resulting from all combinations of $p \in \{0.2, 0.5\}$, $r=3$, and $\theta \in\{ \pm 0.25, \pm 0.75 \}$ were considered.   The negative binomial marginal distribution is overdispersed.   Since $\{ Z_t \}$ is not an autoregression, IYW estimates are not considered.

Figure \ref{f:negbinom-ma1-sim} displays box plots of parameter estimates from models with $\theta = 0.75$ (left panel) and $\theta = -0.75$ (right panel). The PF/SMC approach is clearly superior here for all parameters and sample sizes.  GL estimates incur ``boundary issues'' with $\hat{\theta}_{GL}$ for small $T$ and negatively correlated series (right panel).  Elaborating, we impose $\hat{\theta}$ to lie in $(-1,1)$ for an invertible moving-average and some GL runs ``press this estimate" out to $-1$.  GL boundary issues (and any biases) dissipate and sampling variability decreases appreciably with the largest series length $T=400$; this said, PF/SMC still performs best.

\begin{figure}[h]
\centerline{
\includegraphics[width = 0.5\textwidth]{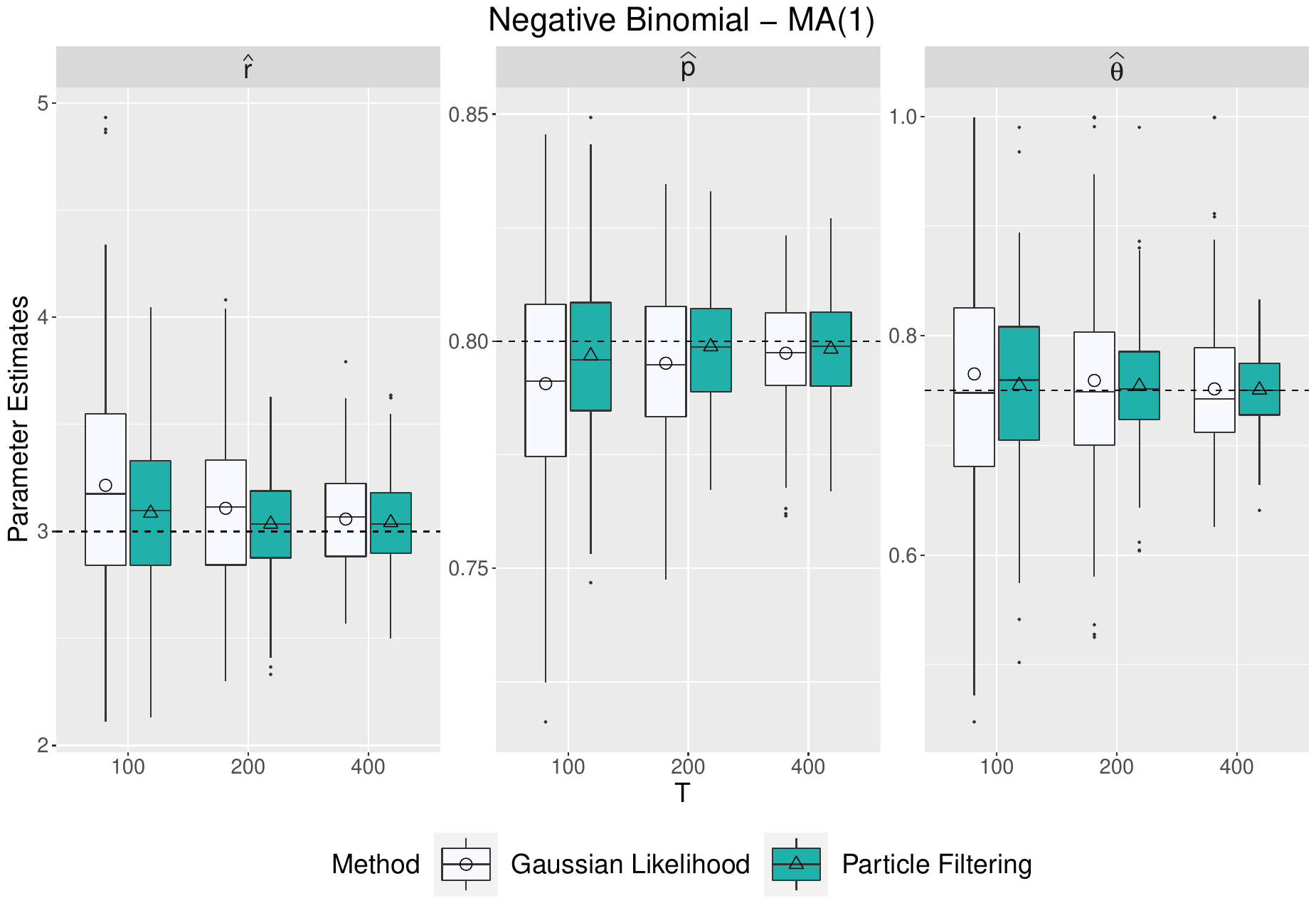}
\includegraphics[width = 0.5\textwidth]{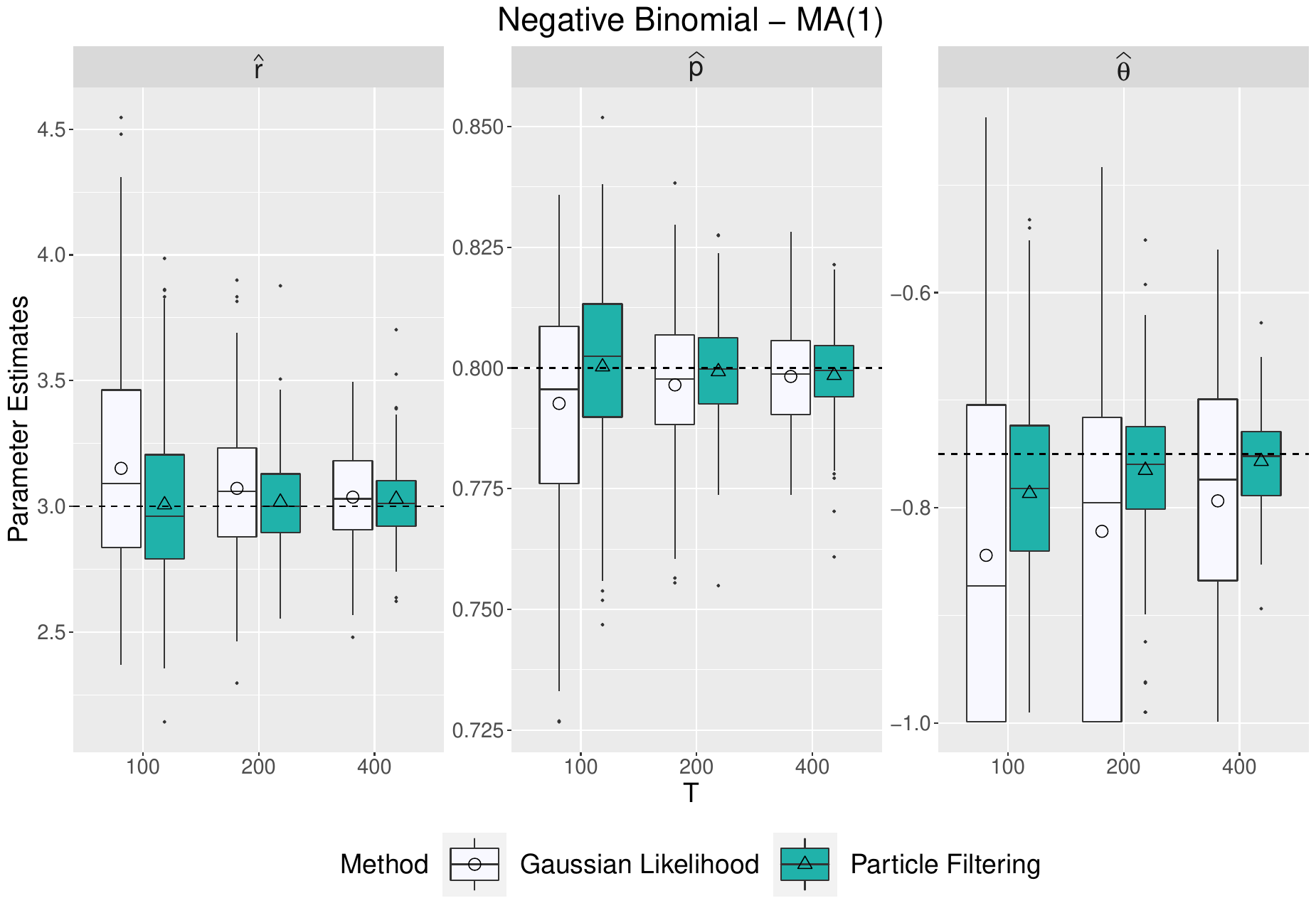}
}
\caption{\label{f:negbinom-ma1-sim} \textit{Gaussian likelihood and PF/SMC parameter estimates for 200 synthetic negative binomial MA(1)  series of lengths $T = 100, 200$ and $400$. The true parameter values (indicated by the horizontal dashed lines)
are $r=3$, $p=0.2$ and $\theta=0.75$ (left panel) and $r=3$, $p=0.2$ and $\theta=-0.75$ (right panel).}}
\end{figure}

Overall, PF/SMC likelihood methods exhibit the best performance, with the simple moment IYW methods being serviceable in the case where $\{ Z_t \}$ is an autoregression.  PF/SMC techniques were also recommended by \cite{HanDeOliveria_2018} (compared to other likelihood approximations) in spatial settings.

\begin{Remark}
We conducted an ANOVA-type experiment to numerically quantify the PF/SMC approximation error and compare its magnitude against estimation bias.   Specifically, our simulation study fitted each realization (from a total of 200) of a Poisson-AR(12) model five times (we do not list the chosen AR coefficients, but results are reasonably robust to their choice).  For each fit, the  particle numbers $N \in \{ 5, 10, 100, 500 \}$ are considered.  For each $N$, the 200 5-tuples of parameter estimates can be viewed as 200 ANOVA treatments, where the between- and within-treatments variations quantify the estimation and approximation error respectively. We found that the estimation error dominated the PF/SMC approximation error by several orders of magnitude, even with the smallest number of particles. While detailed results are omitted for brevity's sake, the inference is that the PF/SMC likelihood approximation is reasonably accurate in this setting.	
\end{Remark}

\section{Additional application tables}
\label{a:application-tables}

The following two tables compliment results presented in the applications section of the paper.

\begin{table}[h!]
	\centering
	{\footnotesize
	\begin{tabular}{|c|c|ccccccc|}
	\hline
	Marginal Distribution &Model                & WN & AR(1)    & AR(2)   & AR(3) & MA(1)    & MA(2) & MA(3)  \\
	\hline
	 \multirow{4}{*}{negative binomial}
	    & $\mbox{AICc}_{GL}$ & 844.2  & 827.5           & 828.4 & 829.9   & 834.2  & \textbf{825.4}  & 825.4 \\
	 {} & $\mbox{BIC}_{GL}$  & 851.9  & \textbf{837.7}  & 841.0   & 844.9   & 844.4  & 838             & 840.4 \\
	 \cline{2-9}
	 {} & $\mbox{AICc}_{PF}$ & 748.5  & 736.9  & 732.0   & \textbf{721.7}  & 741.5  & 730.3 & 729.9\\
	 {} & $\mbox{BIC}_{PF}$  & 756.2   & 747.1  & 744.6 & \textbf{736.7}  & 751.7  & 742.9 & 744.9 \\
	\hline
	 \multirow{4}{*}{generalized Poisson}
	    & $\mbox{AICc}_{GL}$ & 847.5 & 830.6 & 831.0  & 833.3 & 836.9 & 828.7 & \textbf{828.6} \\
	 {} & $\mbox{BIC}_{GL}$  & 855.2 & \textbf{840.8} & 843.6  & 848.3 & 847.0 & 841.3 & 843.6 \\
	 \cline{2-9}
	 {} & $\mbox{AICc}_{PF}$ & 769.2 & 754.1 & 749.8  & \textbf{741.2} & 758.6 & 749.8 & 749.9 \\
	 {} & $\mbox{BIC}_{PF}$  & 776.9 & 764.3 & 762.4  & \textbf{756.2} & 768.8 & 762.4 & 764.9 \\
	\hline
	\end{tabular}}
	\caption{AIC and BIC statistics for generalized Poisson and negative binomial distributions with different latent Gaussian ARMA orders.}
	\label{a:ARMAfits-2}
	\end{table}

	\begin{table}[h!]
		\centering
		{\small
		\begin{tabular}{|c|cccccc|}
		\hline
		Parameters           & $\phi_1$ & $\phi_2$ & $\phi_3$ & $\beta_0$ & $\beta_1$  & $a$\\
		\hline
		GL Estimates         &   -0.498 &   0.104 &   0.188  & 2.450 &  0.467 &   0.201\\
		GL Standard Errors   &    0.245 &  0.250  &  0.161   & 0.095 & 0.108  & 0.036   \\
		\hline
		PF/SMC Estimates         & -0.331 & 0.178  & 0.232  & 2.211  & 1.038 & 0.298  \\
		PF/SMC Standard Errors   &  0.086 & 0.098  & 0.091  & 0.144  & 0.301 & 0.038 \\
		\hline
		\end{tabular}}
		\caption{Estimates and standard errors of the generalized Poisson-AR(3) model.}
		\label{GenPoisAR3fit}
	\end{table}
\newpage

\bibliographystyle{Chicago}

\bibliography{bib_latentGaussCounts}
\end{document}